\newtheorem{theorem}{Theorem}
\newtheorem{corollary}{Corollary}
\newcommand{\snr}{\rm SNR}
\begin{document}
\title{Fractional Power Control for \\ Decentralized Wireless Networks}

\author{Nihar Jindal, Steven Weber, Jeffrey G. Andrews
\thanks{The contact author N. Jindal (nihar@umn.edu) is with the University of Minnesota,
S. Weber is with Drexel University, J. Andrews is with the
University of Texas at Austin.  This research was supported by NSF
grant no. 0634763 (Jindal), no. 0635003 (Weber), nos. 0634979 and
0643508 (Andrews), and the DARPA IT-MANET program, grant no.
W911NF-07-1-0028 (all authors). An early, shorter version of this
work appeared at \emph{Allerton} 2007 \cite{JinWeb_Allerton07}.
Manuscript date: \today.}}

\maketitle

\begin{abstract}
We consider a new approach to power control in decentralized
wireless networks, termed fractional power control (FPC).
Transmission power is chosen as the current channel quality raised
to an exponent $-s$, where $s$ is a constant between 0 and 1. The
choices $s = 1$ and $s = 0$ correspond to the familiar cases of
channel inversion and constant power transmission, respectively.
Choosing $s \in (0,1)$ allows all intermediate policies between
these two extremes to be evaluated, and we see that usually neither
extreme is ideal. We derive closed-form approximations for the
outage probability relative to a target SINR in a decentralized (ad
hoc or unlicensed) network as well as for the resulting transmission
capacity, which is the number of users/m$^2$ that can achieve this
SINR on average. Using these approximations, which are quite
accurate over typical system parameter values, we prove that using
an exponent of $s^*=\frac{1}{2}$ minimizes the outage probability,
meaning that the inverse square root of the channel strength is a
sensible transmit power scaling for networks with a relatively low
density of interferers. We also show numerically that this choice of
$s$ is robust to a wide range of variations in the network
parameters. Intuitively, $s^*=\frac{1}{2}$ balances between helping
disadvantaged users while making sure they do not flood the network
with interference.
\end{abstract}

%%%%%%%%%%%%%%%%%%%%%%%%%%%%%%%%%%%%%%%%%%%%%%%%%%%%%%%%%%%%%%%
\section{\label{sec:1} Introduction}

Power control is a fundamental adaptation mechanism in wireless
networks, and is used to at least some extent in virtually all
terrestrial wireless systems. For a single user fading channel in
which the objective is to maximize expected rate, it is optimal to
increase transmission power (and rate) as a function of the
instantaneous channel quality according to the well-known
waterfilling policy \cite{GolVar97}. On the other hand, if the
objective is to consistently achieve a target rate (or SNR), then
the power should be adjusted so that this target level is exactly
met.  Such an objective is philosophically the opposite of
waterfilling, since power is inversely related to the instantaneous
channel quality: we call this \emph{channel inversion}. Although
suboptimal from an information theory point of view, some channel
inversion is used in many modern wireless systems to adapt to the
extreme dynamic range (often $> 50$ dB due to path loss differences
as well as multipath fading) that those systems experience, to
provide a baseline user experience over a long-term time-scale.

\subsection{Background and Motivation for Fractional Power Control}

In a multi-user network in which users mutually interfere, power
control can be used to adjust transmit power levels so that all
users simultaneously can achieve their target SINR levels.  The
Foschini-Miljanic algorithm is an iterative, distributed power
control method that performs this task assuming that each receiver
tracks its instantaneous SINR and feeds back power adjustments to
its transmitter \cite{FosMil93}.  Considerable work has deeply
explored the properties of these algorithms, including developing a
framework that describes all power control problems of this type
\cite{Yat95}, as well as studying the feasibility and implementation
of such algorithms \cite{BamChe95, HerCho00}, including with varying
channels \cite{ChaVee03}; see the recent monographs
\cite{SchubertBocheNOW}\cite{MChiangNOW} for excellent surveys of
the vast body of literature.  This body of work, while in many
respects quite general, has been primarily focused on the cellular
wireless communications architecture, particularly in which all
users have a common receiver (i.e., the uplink). More recently,
there has been considerable interest in power control for
decentralized wireless networks, such as unlicensed spectrum access
and ad hoc networks
\cite{ElbEph02,CruSan03,Hae03,AgaKri01,KawKum03,Chi05}.  A key
distinguishing trait of a decentralized network is that users
transmit to distinct receivers in the same geographic area, which
causes the power control properties to change considerably.

In this paper, we explore the optimal power control policy for a
multi-user decentralized wireless network with mutually interfering
users and a common target SINR.  We do not consider iterative
algorithms and their convergence.  Rather, motivated by the poor
performance of channel inversion in decentralized networks
\cite{WebAndJin07}, we develop a new transmit power policy called
\emph{fractional power control}, which is neither channel inversion
nor fixed transmit power, but rather a trade-off between them.
Motivated by a recent Motorola proposal \cite{XiaRat06} for fairness
in cellular networks, we consider a policy where if $H$ is the
channel power between the transmitter and receiver, a transmission
power of $H^{-s}$ is used, where $s$ is chosen in $[0,1]$. Clearly,
$s=0$ implies constant transmit power, whereas $s=1$ is channel
inversion.  The natural question then is: what is an appropriate
choice of $s$?  We presume that $s$ is decided offline and that all
users in the network utilize the same $s$.

\subsection{Technical Approach}

We consider a spatially distributed (decentralized) network,
representing either a wireless ad hoc network or unlicensed
spectrum usage by many nodes (e.g., Wi-Fi or spectrum sharing
systems).  We consider a network that has the following key
characteristics.
\begin{itemize}
    \item Each transmitter communicates with a single receiver that is
a distance $d$ meters away.
    \item Channel attenuation is determined by path loss (with
    exponent $\alpha$) and a (flat) fading value $H$.
    \item Each transmitter knows the channel power to its intended
    receiver, but has no knowledge about other transmissions.
    \item All multi-user interference is treated as noise.
    \item Transmitters do not schedule their transmissions based on their channel conditions or the activities of other nodes.
    \item Transmitter node locations are modeled by a homogeneous spatial (2-D) Poisson process.
\end{itemize}
These modeling assumptions are made to simplify the analysis, but
in general reasonably model a decentralized wireless network with
random transmitter locations, and limited feedback mechanisms. In
particular, the above assumptions refer to the situation where a
connection has been established between a transmitter and
receiver, in which case the channel power can be learned quickly
either through reciprocity or a few bits of feedback.  It is not
however as easy to learn the interference level since it may
change suddenly as interferers turn on and off or physically move
(and reciprocity does not help).  The fixed transmit distance
assumption is admittedly somewhat artificial, but is significantly
easier to handle analytically, and has been shown to preserve the
integrity of conclusions even with random transmit distances. For
example, \cite{WebAndJin07,WebYan05} prove that picking the
source-destination distance $d$ from an arbitrary random
distribution reduces the transmission capacity by a constant
factor of $E[d^2]/(E[d])^2 \geq 1$.  Therefore, although fixed
distance $d$ can be considered best-case as far as the numerical
value of transmission capacity, this constant factor will not
change fractional power control's relative effect on the
transmission capacity, which is the subject of this paper.

\subsection{Contributions and Organization}

The contributions of the paper are the suggestion of fractional
power control for wireless networks and the derivation of the
optimum power control exponent $s^*=\frac{1}{2}$.  The exponent
$s=\frac{1}{2}$ is shown to be optimal for an approximation to the
outage probability/transmission that is valid for relatively low
density networks that are primarily interference-limited (i.e., the
effect of thermal noise is not overly large); if the relative
density or the effect of noise is large, then our numerical results
show that no power control ($s=0$) is generally preferred. In the
relatively large parameter space where our primary approximation is
valid, fractional power control with the choice $s^* = \frac{1}{2}$
is shown to greatly increase the transmission capacity of a 1-hop ad
hoc network for small path loss exponents (as $\alpha \to 2$), with
more modest gains for higher attenuation channels. The results open
a number of possible avenues for future work in the area of power
control, and considering the prevalence of power control in
practice, carry several design implications.

The remainder of the paper is organized as follows.  Section II
provides background material on the system model, and key prior
results on transmission capacity that are utilized in this paper.
Section III holds the main results, namely Theorem \ref{thm:3} which
gives the outage probability and transmission capacity achieved by
fractional power control, and Theorem \ref{thm:4} which determines
the optimum power control exponent $s^*$ for the outage probability
approximation. Section IV provides numerical plots that explore the
numerically computed optimal $s^*$, which provides insight on how to
choose $s$ in a real wireless network. Section V suggests possible
extensions and applications of fractional power control, while
Section VI concludes the paper.

%%%%%%%%%%%%%%%%%%%%%%%%%%%%%%%%%%%%%%%%%%%%%%%%%%%%%%%%%%%%%%%
\section{\label{sec:2} Preliminaries}

\subsection{System Model}

We consider a set of transmitting nodes at an arbitrary snapshot in
time with locations specified by a homogeneous Poisson point process
(PPP), $\Pi(\lambda)$, of intensity $\lambda$ on the infinite
two-dimensional plane, $\mathbb{R}^2$.  We consider a reference
transmitter-receiver pair, where the reference receiver, assigned
index $0$, is located without loss of generality, at the origin. Let
$X_i$ denote the distance of the $i$-th transmitting node to the
reference receiver.  Each transmitter has an associated receiver
that is assumed to be located a fixed distance $d$ meters away.  Let
$H_{i0}$ denote the (random) distance--independent fading
coefficient for the channel separating transmitter $i$ and the
reference receiver at the origin; let $H_{ii}$ denote the (random)
distance--independent fading coefficient for the channel separating
transmitter $i$ from its intended receiver.  We assume that all the
$H_{ij}$ are i.i.d. (including $i=j$), which implies that no
source-destination (S-D) pair has both a transmitter and receiver
that are very close (less than a wavelength) to one another, which
is reasonable.  Received power is modelled by the product of
transmission power, pathloss (with exponent $\alpha
> 2$), and a fading coefficient. Therefore, the (random) SINR at
the reference receiver is:
\begin{eqnarray}
{\rm SINR}_0 = \frac{P_0 H_{00} d^{-\alpha}}{\sum_{i \in
\Pi(\lambda)} P_i H_{i0} X_i^{-\alpha} + \eta},
\end{eqnarray}
where $\eta$ is the noise power.  Recall our assumption that
transmitters have knowledge of the channel condition, $H_{ii}$,
connecting it with its intended receiver.  By exploiting this
knowledge, the transmission power, $P_i$, may depend upon the
channel, $H_{ii}$.  If Gaussian signaling is used, the corresponding
achievable rate (per unit bandwidth) is $\log_2 ( 1 + {\rm
SINR}_0)$. The Poisson model requires that nodes decide to transmit
independently, which corresponds in the above model to slotted ALOHA
\cite{Bac06}. A good scheduling algorithm by definition introduces
correlation into the set of transmitting nodes, which is therefore
not well modeled by a homogeneous PPP. We discuss the implications
of scheduling later in the paper.

\subsection{Transmission Capacity}

In the outage-based transmission capacity framework, an outage
occurs whenever the SINR falls below a prescribed threshold $\beta$,
or equivalently whenever the instantaneous mutual information falls
below $\log_2(1+ \beta)$.  Therefore, the system-wide outage
probability is
\begin{equation}
q(\lambda) = \mathbb{P}({\rm SINR}_0 < \beta)
%= \mathbb{E} \left[ \mathbb{P} \left(\sum_{i \in \Pi(\lambda)} \frac{P_i H_{i0}}{\mathbb{E}[PH]} X_i^{-\alpha} > \frac{1}{\beta} \frac{P_0 H_{00}}{\mathbb{E}[P H]}d^{-\alpha} - \frac{\eta}{\mathbb{E}[P H]}  \Big\vert P_0,H_{00}\right) \right].
\label{eq:opdef}
\end{equation}
%Note that the random variables within the probability are $\{(P_i,H_{i0},X_i), i \neq 0\}$, and the expectation is taken
%with respect to the joint distribution of $(P_0,H_{00})$.  Also, the expectation
%$\mathbb{E}[P H]=\mathbb{E}[P_i H_{i0}]$ factors into $\mathbb{E}[P]\mathbb{E}[H]$, since $P_i$ is
%a function of $H_{ii}$ but is independent of $H_{i0}$.  That is, the transmission power depends
%upon the channel from transmitter $i$ to its intended receiver, but is independent of the channel
%from transmitter $i$ to the reference receiver.
Because (\ref{eq:opdef}) is computed over the distribution of
transmitter positions as well as the iid fading coefficients (and
consequently transmission powers), it corresponds to fading that
occurs on a time-scale that is comparable or slower than the
packet duration (if (\ref{eq:opdef}) is to correspond roughly to
the packet error rate).  The outage probability is clearly a
continuous increasing function of the intensity $\lambda$.

Define $\lambda(\epsilon)$ as the maximum intensity of
\textit{attempted} transmissions such that the outage probability is
no larger than $\epsilon$, i.e., $\lambda(\epsilon)$ is the unique
solution of $q(\lambda) = \epsilon$.  The transmission capacity is
then defined as $c(\epsilon) = \lambda(\epsilon) (1 - \epsilon) b$,
which is the maximum density of \textit{successful} transmissions
times the spectral efficiency $b$ of each transmission. In other
words, transmission capacity is area spectral efficiency subject to
an outage constraint.

For the sake of clarity, we define the constants $\delta = 2/\alpha
< 1$ and ${\rm SNR}=\frac{p d^{-\alpha}}{\eta}$. Now consider a
path-loss only environment ($H_{i0} = 1$ for all $i$) with constant
transmission power ($P_i = p$ for all $i$).  The main result of
\cite{WebYan05} is given in the following theorem.

\begin{theorem}[\cite{WebYan05}] \label{thm:1}
{\bf Pure pathloss.} {\em Consider a network where the SINR at the
reference receiver is given by (\ref{eq:opdef}) with $H_{i0} = 1$
and $P_i = p$ for all $i$.  Then the following expressions give
bounds on the outage probability and transmission attempt
intensity for $\lambda, ~ \epsilon$ small:
\begin{eqnarray}
q^{\rm pl}(\lambda) &\geq& q^{\rm pl}_l(\lambda) = 1 - \exp
\left\{- \lambda \pi d^2
\left(\frac{1}{\beta } - \frac{1}{\snr} \right)^{-\delta} \right\}, \label{eq-pout} \\
\lambda^{\rm pl}(\epsilon) & \leq & \lambda^{\rm pl}_u(\epsilon) =
-\log(1-\epsilon)\frac{1}{\pi d^2} \left(\frac{1}{\beta } -
\frac{1}{\snr} \right)^{\delta}. \label{eq-transcap}
\end{eqnarray}}
\end{theorem}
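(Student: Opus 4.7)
The plan is to prove (\ref{eq-pout}) via the classical \emph{dominant-interferer lower bound}, and then derive (\ref{eq-transcap}) by inverting this bound, exploiting the fact that $q^{\rm pl}(\lambda)$ is monotone increasing in $\lambda$. First I would rearrange the outage event. With $P_i = p$ and $H_{i0}=1$, the event $\{\snr_0 < \beta\}$ becomes
\begin{equation}
\sum_{i \in \Pi(\lambda)} X_i^{-\alpha} \; > \; d^{-\alpha}\!\left(\frac{1}{\beta} - \frac{1}{\snr}\right) \; =: \; \tau,
\end{equation}
after dividing through by $p$ and using $\snr = p d^{-\alpha}/\eta$. (Implicit here is that $\snr > \beta$; otherwise outage is certain and the bound is trivial.)

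Next I would introduce the dominant-interferer set, defined as those nodes whose contribution alone already violates the interference budget $\tau$:
\begin{equation}
\mathcal{D} \; = \; \{ i \in \Pi(\lambda) : X_i^{-\alpha} > \tau \} \; = \; \Pi(\lambda) \cap B(0, r^*), \qquad r^* := \tau^{-1/\alpha}.
\end{equation}
Since any single dominant interferer suffices to push the aggregate interference above $\tau$, the implication $\mathcal{D} \neq \emptyset \Rightarrow \{\snr_0 < \beta\}$ is immediate, which yields
\begin{equation}
q^{\rm pl}(\lambda) \; \geq \; \mathbb{P}\bigl( \Pi(\lambda) \cap B(0, r^*) \neq \emptyset \bigr).
\end{equation}
I would then invoke the void probability of the homogeneous PPP: $|\Pi(\lambda) \cap B(0,r^*)|$ is Poisson with mean $\lambda \pi (r^*)^2$, so the right-hand side equals $1 - \exp(-\lambda \pi (r^*)^2)$. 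Finally I would simplify $(r^*)^2 = \tau^{-2/\alpha} = \tau^{-\delta}$, and using $\tau = d^{-\alpha}(1/\beta - 1/\snr)$ together with $\alpha \delta = 2$, I obtain $(r^*)^2 = d^2 (1/\beta - 1/\snr)^{-\delta}$. Substituting gives exactly (\ref{eq-pout}).

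The bound (\ref{eq-transcap}) then follows with no further probabilistic content. Since $q^{\rm pl}(\lambda)$ is monotone increasing and continuous in $\lambda$, and $q^{\rm pl}(\lambda) \geq q^{\rm pl}_l(\lambda)$, the unique solution of $q^{\rm pl}(\lambda) = \epsilon$ must lie below the unique solution of $q^{\rm pl}_l(\lambda) = \epsilon$. Solving the latter in closed form for $\lambda$ yields (\ref{eq-transcap}).

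I do not anticipate a serious obstacle: the dominant-interferer trick converts a continuous-valued condition on the shot-noise sum $\sum X_i^{-\alpha}$ into a condition on a single Poisson count, which is exactly solvable. The only conceptual subtlety is remembering that dominance is sufficient but not necessary for outage (multiple small interferers may also cause outage), which is precisely why this argument produces a lower bound on $q^{\rm pl}$ rather than an equality; the corresponding upper bound, which would require ruling out outage despite no single dominant interferer being present, is not needed here.
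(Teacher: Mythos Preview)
Your proposal is correct and follows exactly the dominant-interferer approach that underlies the result. The paper does not give its own proof of Theorem~\ref{thm:1} (it is quoted from \cite{WebYan05}), but in the proof of Theorem~\ref{thm:2} it invokes the same bound in the form $\mathbb{P}\bigl(\sum_i Z_i X_i^{-\alpha} > y\bigr) \geq 1 - \exp\{-\pi\lambda\,\mathbb{E}[Z^{\delta}] y^{-\delta}\}$; specializing to $Z_i \equiv 1$ and $y = d^{-\alpha}(1/\beta - 1/\snr)$ reproduces your computation line for line, and the inversion step for (\ref{eq-transcap}) is identical.
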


Here {\em pl} denotes pathloss.  The transmission attempt
intensity upper bound, $\lambda^{\rm pl}_u(\epsilon)$, is obtained
by solving $q^{\rm pl}_l(\lambda) = \epsilon$ for $\lambda$. These
bounds are shown to be approximations for small $\lambda,\epsilon$
respectively, which is the usual regime of interest.  Note also
that $- \log(1-\epsilon) = \epsilon+ O(\epsilon^2)$, which implies
that transmission density is approximately linear with the desired
outage level, $\epsilon$, for small outages.  The following
corollary illustrates the simplification of the above results when
the noise may be ignored.

\begin{corollary}
\label{cor:1} {\em When $\eta = 0$ the expressions in Theorem
\ref{thm:1} simplify to:
\begin{eqnarray}
\label{eqn:a}
q^{\rm pl}(\lambda) &\geq& q^{\rm pl}_l(\lambda) = 1 - \exp \left\{- \lambda \pi d^2 \beta^{\delta} \right\}, \\
\lambda^{\rm pl}(\epsilon) & \leq & \lambda^{\rm pl}_u(\epsilon) =
-\log(1-\epsilon)\frac{1}{\pi d^2 \beta^{\delta}}. \label{eqn:a2}
\end{eqnarray}}
\end{corollary}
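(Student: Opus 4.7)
The plan is to observe that the corollary follows by direct substitution into the bounds of Theorem~\ref{thm:1}, using the definition $\snr = p d^{-\alpha}/\eta$.

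First, I would note that when $\eta = 0$ we have $\snr = \infty$, so $1/\snr = 0$. Substituting this into the exponent appearing in (\ref{eq-pout}) yields
\begin{equation*}
\left(\frac{1}{\beta} - \frac{1}{\snr}\right)^{-\delta} = \left(\frac{1}{\beta}\right)^{-\delta} = \beta^{\delta},
\end{equation*}
which immediately gives the outage lower bound (\ref{eqn:a}). Strictly speaking, since Theorem~\ref{thm:1} is stated for finite $\eta$, one should interpret $\eta \downarrow 0$ as a limit; both sides of (\ref{eq-pout}) are continuous in $\eta$ at $0$, so the bound passes to the limit without issue.

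Next, I would derive (\ref{eqn:a2}) by the same procedure used in Theorem~\ref{thm:1}, namely solving $q^{\rm pl}_l(\lambda) = \epsilon$ for $\lambda$ in the simplified expression. Setting $1 - \exp\{-\lambda \pi d^2 \beta^{\delta}\} = \epsilon$ and inverting gives
\begin{equation*}
\lambda = -\log(1-\epsilon)\,\frac{1}{\pi d^2 \beta^{\delta}},
\end{equation*}
which is the claimed expression for $\lambda^{\rm pl}_u(\epsilon)$.

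There is no real obstacle here; the corollary is essentially a notational convenience that isolates the interference-limited regime, in which the only system parameters that matter are the transmit distance $d$, the SINR threshold $\beta$, and the path-loss exponent (through $\delta$). I would mention this interpretation briefly so the reader sees why the noiseless case is worth stating separately.
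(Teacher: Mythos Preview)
Your proposal is correct and matches the paper's treatment: the paper does not give an explicit proof of Corollary~\ref{cor:1}, presenting it simply as the direct simplification of Theorem~\ref{thm:1} when the noise term vanishes, which is exactly the substitution $1/\snr \to 0$ you perform.
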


%%%%%%%%%%%%%%%%%%%%%%%%%%%%%%%%%%%%%%%%%%%%%%%%%%%%%%%%%%%%%%%
\section{\label{sec:3} Fractional Power Control}

The goal of the paper is to determine the effect that fractional
power control has on the outage probability lower bound in
(\ref{eq-pout}) and hence the transmission capacity upper bound in
(\ref{eq-transcap}).  We first review the key prior result that we
will use, then derive the maximum transmission densities $\lambda$
for different power control policies.  We conclude the section by
finding the optimal power control exponent $s$.

\subsection{Transmission capacity under constant power and channel inversion}

In this subsection we restrict our attention to two well-known
power control strategies: constant transmit power (or no power
control) and channel inversion.  Under constant power, $P_i = p$
for all $i$ for some common power level $p$.  Under channel
inversion, $P_i = \frac{p}{\mathbb{E}[H^{-1}]}H_{ii}^{-1}$ for all
$i$. This means that the received signal power is $P_i H_{ii}
d^{-\alpha} = \frac{p}{\mathbb{E}[H^{-1}]} d^{-\alpha}$, which is
constant for all $i$.  That is, channel inversion compensates for
the random channel fluctuations between each transmitter and its
intended receiver. Moreover, the expected transmission power is
$\mathbb{E}[P_i] = p$, so that the constant power and channel
inversion schemes use the same expected power.  We would like to
emphasize the distribution of $H$ is arbitrary and can be adapted
in principle to any relevant fading or compound shadowing-fading
model.  For some possible distributions (such as Rayleigh fading,
i.e. $H \sim \exp(1)$), the value $\mathbb{E}[H^{-1}]$ may be
undefined, strictly speaking.  In practice, the transmit power is
finite and so $P_i = \frac{p}{\mathbb{E}[H^{-1}]}H_{ii}^{-1}$ is
finite.  The value $\mathbb{E}[H^{-1}]$ is simply a normalizing
factor and can be interpreted mathematically to mean that $H \to
\min(H,\delta)$ for an arbitrarily small $\delta$. Such a
definition would not affect the results in the paper.

A main result of \cite{WebAndJin07} extended to include thermal
noise is given in the following theorem, with a general proof that
will apply to all three cases of interest: constant power, channel
inversion and fractional power control. Note that {\rm cp} and
${\rm ci}$ are used to denote constant power and channel
inversion, respectively.

\begin{theorem}\label{thm:2}
{\em {\bf Constant power.} Consider a network where the SINR at the
reference receiver is given by (\ref{eq:opdef}) with $P_i = p$ for
all $i$.  Then the following expressions give good approximations of
the outage probability and transmission attempt intensity for
$\lambda,\epsilon$ small.
\begin{eqnarray} \nonumber
q^{\rm cp}(\lambda) &\geq& q^{\rm cp}_l(\lambda) = 1 - \mathbb{P}
\left(H_{00} \geq \frac{\beta}{\snr}\right)
 \mathbb{E} \left[ \exp \left\{-\lambda \pi d^2
\mathbb{E}[H^{\delta}] \left(\frac{H_{00}}{\beta} - \frac{1}{\snr}
\right)^{-\delta}
\right\} \Big\vert H_{00} \geq \frac{\beta}{\snr} \right] \\
& \approx & \tilde{q}^{\rm cp}_l(\lambda) = 1 - \mathbb{P}
\left(H_{00} \geq \frac{\beta}{\snr}\right) \exp \left\{-\lambda \pi
d^2 \mathbb{E}[H^{\delta}] \mathbb{E} \left[
\left(\frac{H_{00}}{\beta} - \frac{1}{\snr} \right)^{-\delta}
\Big\vert
H_{00} \geq \frac{\beta}{\snr} \right] \right\} \nonumber \\
\lambda^{\rm cp}(\epsilon) & \approx & \tilde{\lambda}^{\rm
cp}(\epsilon) = - \log\left( \frac{1-\epsilon}{\mathbb{P}
\left(H_{00} \geq \frac{\beta}{\snr}\right)}  \right) \frac{1}{\pi
d^2} \frac{1}{\mathbb{E}[H^{\delta}]} \mathbb{E}
 \left[ \left(\frac{H_{00}}{\beta}   - \frac{1}{\snr}
 \right)^{-\delta} \Big\vert H_{00} \geq \frac{\beta}{\snr}
 \right]^{-1}. \label{eq:cp3}
\end{eqnarray}
{\bf Channel inversion.} Consider the same network with $P_i =
\frac{p}{\mathbb{E}[H^{-1}]}H_{ii}^{-1}$ for all $i$. Then the
following expressions give tight bounds on the outage probability
and transmission attempt intensity for $\lambda,\epsilon$ small:
\begin{eqnarray} \label{eq:ci1}
q^{\rm ci}(\lambda) &\geq& q^{\rm ci}_l(\lambda) = 1 - \exp
\left\{-\lambda \pi d^2
\mathbb{E}[H^{\delta}]\mathbb{E}[H^{-\delta}] \left(\frac{1}{\beta}
- \frac{\mathbb{E}[H^{-1}]}{\snr }
\right)^{-\delta} \right\}  \\
\lambda^{\rm ci}(\epsilon) & \leq & \lambda^{\rm ci}_u(\epsilon) = -
\log(1-\epsilon) \frac{1}{\pi d^2}
\frac{1}{\mathbb{E}[H^{\delta}]\mathbb{E}[H^{-\delta}]}
\left(\frac{1}{\beta} - \frac{\mathbb{E}[H^{-1}]}{\snr } \right)^{\delta} . \label{eq:ci2} \\
\nonumber
\end{eqnarray}}
\end{theorem}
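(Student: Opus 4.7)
\emph{Constant power.} I would first rearrange the outage event $\{\mathrm{SINR}_0 < \beta\}$ with $P_i = p$ into the shot-noise form
\begin{equation}
    \sum_{i \in \Pi(\lambda)} H_{i0} X_i^{-\alpha} > d^{-\alpha}\!\left(\frac{H_{00}}{\beta} - \frac{1}{\snr}\right),
\end{equation}
which is automatically satisfied (certain outage) whenever $H_{00} < \beta/\snr$. On the complementary event I apply the same nearest-interferer lower bound that drives Theorem \ref{thm:1}: outage is at least as likely as the event that some single interferer alone violates the threshold. To fold the i.i.d.\ interferer fading into the geometry, I invoke the Mapping Theorem for marked PPPs, which tells me that $\{H_{i0}^{-1/\alpha} X_i\}_i$ is a homogeneous PPP with intensity $\lambda \mathbb{E}[H^{\delta}]$ (the exponent $\delta$ arises because the two-dimensional Jacobian of $x \mapsto h^{1/\alpha} x$ is $h^{\delta}$), so distributionally $\sum_i H_{i0} X_i^{-\alpha}$ equals $\sum_i \tilde X_i^{-\alpha}$ on this rescaled PPP. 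The nearest-interferer bound then gives, conditionally on $H_{00}$,
\begin{equation}
    \mathbb{P}(\mathrm{outage} \mid H_{00}) \geq 1 - \exp\!\left\{-\lambda \pi d^2 \mathbb{E}[H^{\delta}]\left(\tfrac{H_{00}}{\beta} - \tfrac{1}{\snr}\right)^{-\delta}\right\},
\end{equation}
and averaging over $H_{00}$ on $\{H_{00} \geq \beta/\snr\}$, combined with the unconditional certain-outage piece, produces $q^{\mathrm{cp}}_l(\lambda)$. I then pass to the approximation $\tilde q^{\mathrm{cp}}_l$ by pulling the $\mathbb{E}$ inside the exponential, which is first-order exact in $\lambda$ via $e^{-\lambda c} = 1 - \lambda c + O(\lambda^2)$. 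Inverting $\tilde q^{\mathrm{cp}}_l(\lambda) = \epsilon$ by elementary log algebra yields (\ref{eq:cp3}).

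\emph{Channel inversion.} Substituting $P_i = p H_{ii}^{-1} / \mathbb{E}[H^{-1}]$ cancels $H_{00}$ in the signal term and yields the outage event
\begin{equation}
    \sum_{i \in \Pi(\lambda)} H_{i0} H_{ii}^{-1} X_i^{-\alpha} > d^{-\alpha}\!\left(\frac{1}{\beta} - \frac{\mathbb{E}[H^{-1}]}{\snr}\right).
\end{equation}
The right-hand side is now deterministic, so no conditioning on the signal fading is required. The identical mapping argument applies with effective marks $Z_i := H_{i0} H_{ii}^{-1}$, whose $\delta$-moment factorizes by independence as $\mathbb{E}[Z^{\delta}] = \mathbb{E}[H^{\delta}]\mathbb{E}[H^{-\delta}]$. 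The nearest-interferer lower bound then delivers (\ref{eq:ci1}) immediately, and solving $q^{\mathrm{ci}}_l(\lambda) = \epsilon$ for $\lambda$ gives (\ref{eq:ci2}).

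The main technical subtlety is justifying the $\mathbb{E}$/$\exp$ swap that produces $\tilde q^{\mathrm{cp}}_l$ from $q^{\mathrm{cp}}_l$: since $\exp$ is convex, Jensen's inequality goes the wrong way for a rigorous bound, so this step must be presented as an approximation valid in the small-$\lambda$ regime where both sides agree to first order. The PPP mapping identity and the nearest-interferer bound are exact and require no small-parameter assumption, and the bookkeeping of the two events $\{H_{00} < \beta/\snr\}$ and $\{H_{00} \geq \beta/\snr\}$ in the constant-power case requires only care, not cleverness.
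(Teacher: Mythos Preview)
Your proposal is correct and follows essentially the same route as the paper: rearrange the outage event into shot-noise form, condition on the signal fade $H_{00}$ (splitting off the certain-outage piece), apply the dominant-interferer lower bound to the marked PPP of interferers, then invoke Jensen as a small-$\lambda$ approximation and invert. The only organizational difference is that the paper carries out the argument once for a generic power policy $\{P_i\}$ and then specializes to constant power and channel inversion, whereas you treat the two cases separately; your explicit use of the Mapping Theorem to justify the effective intensity $\lambda\,\mathbb{E}[Z^{\delta}]$ is exactly what underlies the marked-PPP lower bound the paper quotes from \cite{WebAndJin07}.
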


\begin{proof}
The SINR at the reference receiver for a generic power vector
$\{P_i\}$ is
\begin{equation}
{\rm SINR}_0 = \frac{P_0 H_{00} d^{-\alpha}}{\sum_{i \in
\Pi(\lambda)} P_i H_{i0} X_i^{-\alpha} + \eta},
\end{equation}
and the corresponding outage probability is
\begin{equation}
q(\lambda) = \mathbb{P}({\rm SINR}_0 < \beta) = \mathbb{P} \left(
\frac{P_0 H_{00} d^{-\alpha}}{\sum_{i \in \Pi(\lambda)} P_i H_{i0}
X_i^{-\alpha} + \eta} < \beta \right).
\end{equation}
Rearranging yields:
\begin{equation}
q(\lambda) = \mathbb{P} \left( \sum_{i \in \Pi(\lambda)} P_i H_{i0}
X_i^{-\alpha} \geq \frac{P_0 H_{00} d^{-\alpha}}{\beta} - \eta
\right).
\end{equation}
Note that outage is certain when $P_0 H_{00} < \eta \beta
d^{\alpha}$. Conditioning on $P_0 H_{00}$ and using $f(\cdot)$ to
denote the density of $P_0 H_{00}$ yields:
\begin{equation}
q(\lambda) = \mathbb{P} \left(P_0 H_{00} \leq \eta \beta d^{\alpha}
\right) + \int_{\eta \beta d^{\alpha}}^{\infty} \mathbb{P} \left(
\sum_{i \in \Pi(\lambda)} P_i H_{i0} X_i^{-\alpha} \geq \frac{p_0
h_{00} }{\beta d^{\alpha}} - \eta ~ \Big\vert ~ P_0 H_{00} = p_0
h_{00} \right) f(p_0 h_{00}) {\rm d} (p_0 h_{00}).
\end{equation}
Recall the generic lower bound from \cite{WebAndJin07}: if
$\Pi(\lambda) = \{(X_i,Z_i)\}$ is a homogeneous marked Poisson point
process with points $\{X_i\}$ of intensity $\lambda$ and iid marks
$\{Z_i\}$ independent of the $\{X_i\}$, then
\begin{equation}
\mathbb{P} \left(\sum_{i \in \Pi(\lambda)} Z_i X_i^{-\alpha} > y
\right) \geq 1 - \exp \left\{ - \pi \lambda \mathbb{E}[Z^\delta]
y^{-\delta} \right\},
\end{equation}
Applying here with $Z_i = P_i H_{i0}$ and $y = \frac{p_0 h_{00}
}{\beta d^{\alpha}} - \eta$:
\begin{eqnarray}
q(\lambda) & \geq & \mathbb{P} \left(P_0 H_{00} \leq \eta \beta d^{\alpha} \right) + \int_{\eta \beta d^{\alpha}}^{\infty} \left(1 - \exp \left\{- \pi \lambda \mathbb{E}[(P_i H_{i0})^{\delta}] \left( \frac{p_0 h_{00} }{\beta d^{\alpha}} - \eta \right)^{-\delta} \right\} \right) f(p_0 h_{00}) {\rm d} (p_0 h_{00}) \nonumber \\
&=& 1 - \int_{\eta \beta d^{\alpha}}^{\infty} \exp \left\{- \pi \lambda \mathbb{E}[(P_i H_{i0})^{\delta}] \left( \frac{p_0 h_{00} }{\beta d^{\alpha}} - \eta \right)^{-\delta} \right\} f(p_0 h_{00}) {\rm d} (p_0 h_{00}) \nonumber \\
&=& 1 - \mathbb{P} \left(P_0 H_{00} \geq \eta \beta d^{\alpha}
\right) \mathbb{E} \left[ \exp \left\{- \lambda \pi d^2
\mathbb{E}[(P_i H_{i0})^{\delta}] \left( \frac{P_0 H_{00} }{\beta }
- \frac{\eta}{d^{-\alpha}} \right)^{-\delta} \right\} \Big\vert P_0
H_{00} \geq \eta \beta d^{\alpha} \right]. \label{eq-outage_lower}
\end{eqnarray}
The Jensen approximation for this quantity is:
\begin{equation}
q(\lambda) \approx 1 - \mathbb{P} \left(P_0 H_{00} \geq \eta \beta
d^{\alpha} \right) \exp \left\{- \lambda \pi d^2 \mathbb{E}[(P_i
H_{i0})^{\delta}] \mathbb{E} \left[ \left( \frac{P_0 H_{00} }{\beta
} - \frac{\eta}{d^{-\alpha}} \right)^{-\delta} \Big\vert P_0 H_{00}
\geq \eta \beta d^{\alpha} \right] \right\}.
\label{eq-outage_jensen}
\end{equation}

For constant power we substitute $P_i H_{i0} = p H_{i0}$ (for all
$i$) into (\ref{eq-outage_lower}) and (\ref{eq-outage_jensen}) and
manipulate to get the expressions for $q^{\rm cp}_l(\lambda)$ and
$\tilde{q}^{\rm cp}_l(\lambda)$ in (\ref{eq:cp3}).  To obtain
$\tilde{\lambda}^{\rm cp}(\epsilon)$,  we solve $\tilde{q}^{\rm
cp}_l(\lambda) = \epsilon$ for $\lambda$. For channel inversion,
$P_0 H_{00} = \frac{p}{\mathbb{E}[H^{-1}]}$ while for $i \ne 0$  we
have $P_i H_{i0}
=\frac{p}{\mathbb{E}[H^{-1}]}\frac{H_{i0}}{H_{ii}}$. Plugging into
(\ref{eq-outage_lower}) and using the fact that $H_{ii}$ and
$H_{i0}$ are i.i.d. yields (\ref{eq:ci1}), and (\ref{eq:ci2}) is
simply the inverse of (\ref{eq:ci1}).
\end{proof}
Note that  channel inversion only makes sense when $\frac{\snr
}{\mathbb{E}[H^{-1}]} = \frac{p d^{-\alpha}}{\eta
\mathbb{E}[H^{-1}]}$, the effective interference-free SNR after
taking into account the power cost of inversion, is larger than the
SINR threshold $\beta$.
 The validity of the outage lower bound/density upper bound as well
 as of the Jensen's approximation
are evaluated in the numerical and simulation results in Section
\ref{sec:numerical}.

When the thermal noise can be ignored, these results simplify to the
expressions given in the following corollary:
\begin{corollary}
\label{cor:2} {\em When $\eta = 0$ the expressions in Theorem
\ref{thm:2} simplify to:
\begin{eqnarray}
\label{eqn:b}
q^{\rm cp}(\lambda) &\geq& q^{\rm cp}_l(\lambda) = 1 - \mathbb{E} \left[ \exp \left\{-\lambda \pi d^2 \beta^{\delta} \mathbb{E}\left[H^{\delta} \right] H_{00}^{-\delta}\right\}\right] \nonumber \\
& \approx & \tilde{q}^{\rm cp}_l(\lambda) = 1 - \exp \left\{-\lambda \pi d^2 \beta^{\delta} \mathbb{E}\left[H^{\delta} \right] \mathbb{E} \left[H^{-\delta} \right] \right\}, \nonumber \\
q^{\rm ci}(\lambda) & \geq & q^{\rm ci}_l(\lambda) = 1 - \exp \left\{-\lambda \pi d^2 \beta^{\delta} \mathbb{E}\left[H^{\delta} \right] \mathbb{E} \left[H^{-\delta} \right] \right\}, \nonumber \\
\lambda^{\rm cp}(\epsilon) & \approx & \tilde{\lambda}^{\rm cp}(\epsilon) = - \log(1-\epsilon) \frac{1}{\pi d^2 \beta^{\delta}} \frac{1}{\mathbb{E}\left[H^{\delta} \right] \mathbb{E} \left[H^{-\delta} \right]}, \nonumber \\
\lambda^{\rm ci}(\epsilon) & \leq & \lambda^{\rm ci}_u(\epsilon) = - \log(1-\epsilon) \frac{1}{\pi d^2 \beta^{\delta}} \frac{1}{\mathbb{E}\left[H^{\delta} \right] \mathbb{E} \left[H^{-\delta} \right]}.
\end{eqnarray} }
\end{corollary}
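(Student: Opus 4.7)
The plan is to derive Corollary \ref{cor:2} as a direct specialization of Theorem \ref{thm:2} by sending $\eta \to 0$, which drives $\snr = p d^{-\alpha}/\eta \to \infty$ and hence $1/\snr \to 0$ in every place it appears. No new probabilistic machinery is required; the work is an algebraic simplification of the formulas already established in Theorem \ref{thm:2}, so the proof should be quite short.

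First I would handle the constant-power case. In the $\eta = 0$ limit, the conditioning event $\{H_{00} \geq \beta/\snr\}$ becomes almost sure, so $\mathbb{P}(H_{00} \geq \beta/\snr) \to 1$ and the conditional expectation becomes an unconditional one over $H_{00}$. The inner term $\left( H_{00}/\beta - 1/\snr \right)^{-\delta}$ collapses to $\beta^{\delta} H_{00}^{-\delta}$. Substituting these two simplifications into the $q^{\rm cp}_l(\lambda)$ expression of Theorem \ref{thm:2} yields the first line of (\ref{eqn:b}), and the corresponding Jensen approximation follows by pulling the expectation inside the exponent just as in the proof of Theorem \ref{thm:2}, replacing $H_{00}^{-\delta}$ with $\mathbb{E}[H^{-\delta}]$.

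Next I would treat the channel-inversion case. Under the same $1/\snr \to 0$ limit, the factor $\left( 1/\beta - \mathbb{E}[H^{-1}]/\snr \right)^{-\delta}$ reduces cleanly to $\beta^{\delta}$, and plugging this into (\ref{eq:ci1}) gives the stated $q^{\rm ci}_l(\lambda)$. Note that the normalization $\mathbb{E}[H^{-1}]$ drops out entirely, which is why the ill-definedness of $\mathbb{E}[H^{-1}]$ for Rayleigh fading causes no issue in the noiseless regime. The density formulas $\tilde{\lambda}^{\rm cp}(\epsilon)$ and $\lambda^{\rm ci}_u(\epsilon)$ are then obtained mechanically by solving $\tilde{q}^{\rm cp}_l(\lambda) = \epsilon$ and $q^{\rm ci}_l(\lambda) = \epsilon$ for $\lambda$, taking logarithms and rearranging.

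There is no real obstacle here; the only observation worth highlighting is that the Jensen approximation $\tilde{q}^{\rm cp}_l$ and the channel-inversion bound $q^{\rm ci}_l$ coincide exactly once $\eta = 0$, which foreshadows one of the main messages of the paper: channel inversion provides no transmission-capacity gain over constant power in the interference-limited regime under this approximation. This coincidence should be flagged explicitly in the write-up, as it motivates the subsequent search for a better exponent $s \in (0,1)$.
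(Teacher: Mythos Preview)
Your proposal is correct and matches the paper's approach exactly: the paper offers no separate proof for this corollary, treating it as an immediate algebraic specialization of Theorem~\ref{thm:2} under $\eta = 0$ (equivalently $\snr \to \infty$), which is precisely what you outline. Your closing remark about $\tilde{q}^{\rm cp}_l = q^{\rm ci}_l$ in the noiseless limit is also the same observation the paper makes in the paragraph following the corollary.
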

Note that these expressions match Theorem 3 and Corollary 3 of the
SIR-analysis performed in \cite{WebAndJin07}.

In the absence of noise the constant power outage probability
approximation equals the
 channel inversion outage probability lower bound: $\tilde{q}_l^{\rm cp}(\lambda) = q_l^{\rm ci}(\lambda)$.
As a result, the constant power transmission attempt intensity
approximation equals the channel inversion
 transmission attempt intensity upper bound: $\tilde{\lambda}^{\rm cp}(\epsilon) = \lambda_u^{\rm ci}(\epsilon)$.
 Comparing $\tilde{\lambda}^{\rm cp}(\epsilon) = \lambda_u^{\rm ci}(\epsilon)$ in (\ref{eqn:b}) with
 $\lambda_u^{\rm pl}(\epsilon)$ in (\ref{eqn:a2}) it is evident that the impact of fading on the transmission capacity is measured by the loss factor, $L^{\rm cp} = L^{\rm ci}$, defined as
\begin{equation}
L^{\rm cp} = L^{\rm ci} = \frac{1}{\mathbb{E}\left[H^{\delta} \right] \mathbb{E} \left[H^{-\delta} \right]} < 1.
\end{equation}
The inequality is obtained by applying Jensen's inequality to the convex function $1/x$ and the random variable $H^{\delta}$.  If constant power is used, the $\mathbb{E}[H^{-\delta} ]$ term is due to fading of the desired signal while the $\mathbb{E}[H^{\delta}]$ term is due to fading of the interfering links.  Fading of the interfering signal has a positive effect while fading of the desired signal has a negative effect.  If channel inversion is performed the $\mathbb{E}[H^{-\delta} ]$ term is due to each interfering transmitter using power proportional to $H_{ii}^{-1}$.  When the path loss exponent, $\alpha$, is close to 2 then $\delta = 2/\alpha$ is close to one, so the term $\mathbb{E}[H^{-\delta} ]$ is nearly equal to the expectation of the inverse of the fading, which can be extremely large for severe fading distributions such as Rayleigh.  As a less severe example, $\alpha =3$, the loss factor for Rayleigh fading is $L^{\mathrm{cp}} = L^{\mathrm{ci}} = 0.41$.

\subsection{Transmission capacity under fractional power control}

In this section we generalize the results of Theorem \ref{thm:2}
by introducing fractional power control (FPC) with parameter $s
\in [0,1]$.  Under FPC the transmission power is set to $P_i =
\frac{p}{\mathbb{E}[H^{-s}]} H_{ii}^{-s}$ for each $i$. The
received power at receiver $i$ is then $P_i H_{ii} d^{-\alpha} =
\frac{p}{\mathbb{E}[H^{-s}]} H_{ii}^{1-s}d^{-\alpha}$, which
depends upon $i$ aside from $s=1$.  The expected transmission
power is $p$, ensuring a fair comparison with the results in
Theorems \ref{thm:1} and \ref{thm:2}.  Note that constant power
corresponds to $s=0$ and channel inversion corresponds to $s=1$.
The following theorem gives good approximations on the outage
probability and maximum allowable transmission intensity under
FPC.

\begin{theorem}\label{thm:3}
{\em {\bf Fractional power control.} Consider a network where the
SINR at the reference receiver is given by (\ref{eq:opdef}) with
$P_i = \frac{p}{\mathbb{E}[H^{-s}]} H_{ii}^{-s}$ for all $i$, for
some $s \in [0,1]$.  Then the following expressions give good
approximations of the outage probability and maximum transmission
attempt intensity for $\lambda, ~ \epsilon$ small
\begin{eqnarray*}
q^{\rm fpc}(\lambda) & \geq & q^{\rm fpc}_l(\lambda) = 1 -
\mathbb{P} \left(H_{00} \geq \kappa(s) \right) \times \\
&& \hspace{25mm}  \mathbb{E} \left[ \exp \left\{- \lambda  \pi d^2
\mathbb{E}[H^{-s \delta}] \mathbb{E}[H^{\delta}] \left(
\frac{H_{00}^{1-s} }{\beta } - \frac{\mathbb{E}[H^{-s}]}{\snr}
\right)^{-\delta} \right\} \Big\vert H_{00} \geq \kappa(s) \right] \\
    & \approx & \tilde{q}^{\rm fpc}_l(\lambda) = 1 - \mathbb{P} \left(H_{00} \geq \kappa(s) \right) \times \\
&& \hspace{25mm} \exp \left\{- \lambda \pi d^2  \mathbb{E}[H^{-s
\delta}] \mathbb{E}[H^{\delta}]
    \mathbb{E}  \left[ \left(\frac{H_{00}^{1-s} }{\beta} -  \frac{\mathbb{E}[H^{-s}]}{\snr} \right)^{-\delta} \Big\vert
H_{00} \geq \kappa(s)  \right] \right\}\\
    \lambda^{\rm fpc}(\epsilon) & \approx & \tilde{\lambda}^{\rm fpc}(\epsilon) =
    - \log \left( \frac{1-\epsilon}{\mathbb{P} \left(H_{00} \geq \kappa(s) \right)} \right) \frac{1}{\pi d^2} \frac{1} {\mathbb{E}[H^{-s\delta}]\mathbb{E}[H^{\delta}]}
\times \\
&& \hspace{65mm}    \left(  \mathbb{E}
    \left[ \left(\frac{H_{00}^{1-s} }{\beta } -  \frac{\mathbb{E}[H^{-s}]}{\snr} \right)^{-\delta} \Big\vert
H_{00} \geq \kappa(s)  \right] \right)^{-1}
\end{eqnarray*}}
where $\kappa(s) = \left( \frac{\beta}{\rm SNR} \mathbb{E}[H^{-s}]
\right)^{\frac{1}{1-s}}$.

\end{theorem}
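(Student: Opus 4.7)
The plan is to follow exactly the same scaffolding as the proof of Theorem~\ref{thm:2}, which already established a generic lower bound \eqref{eq-outage_lower} valid for \emph{any} power assignment $\{P_i\}$ with i.i.d.\ marks. First I would substitute the FPC law $P_i = \frac{p}{\mathbb{E}[H^{-s}]}H_{ii}^{-s}$ into that bound. Two identities do the heavy lifting: (i) $P_0 H_{00} = \frac{p}{\mathbb{E}[H^{-s}]}H_{00}^{1-s}$, so the conditioning event $P_0 H_{00} \geq \eta\beta d^{\alpha}$ reduces to $H_{00}^{1-s} \geq \frac{\beta\,\mathbb{E}[H^{-s}]}{\snr}$, i.e.\ $H_{00} \geq \kappa(s)$; and (ii) for the interferers, $H_{ii}$ and $H_{i0}$ are independent, giving the factorization $\mathbb{E}[(P_i H_{i0})^{\delta}] = \bigl(\tfrac{p}{\mathbb{E}[H^{-s}]}\bigr)^{\delta}\mathbb{E}[H^{-s\delta}]\,\mathbb{E}[H^{\delta}]$. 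Note that because $H_{ii}$ and $H_{i0}$ are iid across $i$ and independent of $\Pi(\lambda)$, the marks $Z_i = P_i H_{i0}$ still satisfy the hypotheses of the Campbell-type estimate used in the proof of Theorem~\ref{thm:2}, so no new stochastic geometry input is needed.

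Next I would factor $\frac{p}{\mathbb{E}[H^{-s}]}$ out of the argument $\frac{P_0 H_{00}}{\beta} - \eta d^{\alpha}$, rewriting it as $\frac{p}{\mathbb{E}[H^{-s}]}\bigl(\frac{H_{00}^{1-s}}{\beta} - \frac{\mathbb{E}[H^{-s}]}{\snr}\bigr)$. Raising to the $-\delta$ power and multiplying by $\mathbb{E}[(P_i H_{i0})^{\delta}]$ causes the prefactors $\bigl(\tfrac{p}{\mathbb{E}[H^{-s}]}\bigr)^{\delta}$ and $\bigl(\tfrac{p}{\mathbb{E}[H^{-s}]}\bigr)^{-\delta}$ to cancel cleanly, leaving $\mathbb{E}[H^{-s\delta}]\mathbb{E}[H^{\delta}]\bigl(\frac{H_{00}^{1-s}}{\beta}-\frac{\mathbb{E}[H^{-s}]}{\snr}\bigr)^{-\delta}$ inside the exponent, which is exactly the quantity appearing in the claimed $q^{\rm fpc}_l(\lambda)$.

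For the Jensen-type approximation $\tilde{q}^{\rm fpc}_l(\lambda)$, I would apply Jensen's inequality on the convex function $e^{-x}$ to push the outer conditional expectation inside the exponential, exactly as in the passage from \eqref{eq-outage_lower} to \eqref{eq-outage_jensen}. Finally, the density approximation $\tilde{\lambda}^{\rm fpc}(\epsilon)$ is obtained purely algebraically by setting $\tilde{q}^{\rm fpc}_l(\lambda) = \epsilon$, isolating the exponent, and inverting; the factor $\mathbb{P}(H_{00}\geq\kappa(s))$ becomes the denominator inside the logarithm.

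There is no real conceptual obstacle here, since both the Campbell-type lower bound on the interference tail and the Jensen step are inherited verbatim from the proof of Theorem~\ref{thm:2}. The main difficulty is simply bookkeeping: carefully tracking the interaction between the exponents $-s$, $1-s$, $\delta$, and $-s\delta$, making sure the normalization $\mathbb{E}[H^{-s}]$ is threaded correctly through the SNR-dependent terms, and verifying that the conditioning event on $P_0 H_{00}$ translates into the stated threshold $\kappa(s)$, which is the one place where the exponent $1/(1-s)$ enters and where the case $s=1$ (channel inversion) must be recovered as the appropriate limit.
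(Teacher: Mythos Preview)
Your proposal is correct and follows exactly the paper's approach: the paper's proof of Theorem~\ref{thm:3} simply states that one substitutes $P_i = \frac{p}{\mathbb{E}[H^{-s}]}H_{ii}^{-s}$ into the generic argument of Theorem~\ref{thm:2} (i.e., into \eqref{eq-outage_lower} and \eqref{eq-outage_jensen}) and then inverts $\tilde{q}^{\rm fpc}_l(\lambda)=\epsilon$ to obtain $\tilde{\lambda}^{\rm fpc}(\epsilon)$. Your write-up merely makes explicit the bookkeeping (the reduction of the conditioning event to $H_{00}\geq\kappa(s)$, the factorization $\mathbb{E}[(P_iH_{i0})^{\delta}]=(p/\mathbb{E}[H^{-s}])^{\delta}\mathbb{E}[H^{-s\delta}]\mathbb{E}[H^{\delta}]$, and the cancellation of the $(p/\mathbb{E}[H^{-s}])^{\pm\delta}$ prefactors) that the paper leaves implicit.
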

\begin{proof}
Under FPC, the transmit power for each user is constructed as $P_i =
\frac{p}{\mathbb{E}[H^{-s}]}H_{ii}^{-s}$.  Substituting this value
into the proof for Theorem 2 immediately gives the expression for
$q^{\rm fpc}_l(\lambda)$. Again, the transmission attempt intensity
approximation is obtained by solving $\tilde{q}_l(\lambda) =
\epsilon$ for $\lambda$.
\end{proof}

As with Theorem \ref{thm:2}, the approximation $q^{\rm
fpc}_l(\lambda) \approx \tilde{q}^{\rm fpc}_l(\lambda)$ is accurate
when the exponential term in $q^{\rm fpc}_l(\lambda)$ is
approximately linear in its argument and thus Jensen's is tight.  In
other words, this approximation utilizes the fact that ${\rm
e}^{-x}$ is nearly linear for small $x$.  Looking at the expression
for $q^{\rm fpc}_l(\lambda)$ we see that this reasonable when the
\textit{relative density} $\lambda \pi d^2$ is small.  If this is
not true then the approximation $\tilde{q}^{\rm fpc}_l(\lambda)$ is
not sufficiently accurate, as will be further seen in the numerical
results presented in Section \ref{sec:numerical}.
 The FPC transmission attempt intensity approximation,
$\tilde{\lambda}^{\rm fpc}(\epsilon)$, is obtained by solving
$\tilde{q}^{\rm fpc}_l(\lambda) = \epsilon$ for $\lambda$. The
following corollary illustrates the simplification of the above
results when the noise may be ignored.

\begin{corollary}
\label{cor:3}
{\em When $\eta = 0$ the expressions in Theorem \ref{thm:3} simplify to:
\begin{eqnarray}
\label{eqn:c}
q^{\rm fpc}(\lambda) &\geq& q^{\rm fpc}_l(\lambda) = 1 - \mathbb{E} \left[ \exp \left\{-\lambda \pi d^2 \beta^{\delta} \mathbb{E}\left[H^{\delta} \right] \mathbb{E}\left[H^{-s\delta} \right] H_{00}^{-(1-s) \delta}\right\}\right] \nonumber \\
& \approx & \tilde{q}^{\rm fpc}_l(\lambda) = 1 - \exp \left\{-\lambda \pi d^2 \beta^{\delta} \mathbb{E}\left[H^{\delta} \right] \mathbb{E}\left[H^{-s\delta} \right] \mathbb{E} \left[ H^{-(1-s) \delta} \right] \right\}, \nonumber \\
\lambda^{\rm fpc}(\epsilon) & \approx & \tilde{\lambda}^{\rm fpc}(\epsilon) = - \log(1-\epsilon) \frac{1}{\pi d^2 \beta^{\delta}} \frac{1}{\mathbb{E}\left[H^{\delta} \right] \mathbb{E}\left[H^{-s\delta} \right] \mathbb{E} \left[ H^{-(1-s) \delta} \right]}.
\end{eqnarray} }
\end{corollary}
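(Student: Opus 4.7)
The plan is to derive the three noiseless expressions by specializing the three displays of Theorem~\ref{thm:3} to the regime $\eta=0$, which is equivalent to taking $\snr = p d^{-\alpha}/\eta \to \infty$. Once this substitution is performed, all of the noise-dependent terms collapse, and what remains is an immediate rearrangement. The only arithmetic input beyond Theorem~\ref{thm:3} is solving a single exponential equation for $\lambda$.

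First I would examine the quantity $\kappa(s) = \left( \frac{\beta}{\snr}\mathbb{E}[H^{-s}] \right)^{1/(1-s)}$. For any $s \in [0,1)$, letting $\snr \to \infty$ yields $\kappa(s) \to 0$, so $\mathbb{P}(H_{00} \geq \kappa(s)) \to 1$ and the conditional distribution of $H_{00}$ given $H_{00} \geq \kappa(s)$ converges to the unconditional distribution of $H$. Simultaneously, the additive noise correction $\mathbb{E}[H^{-s}]/\snr$ inside the bracketed term vanishes, so
\begin{equation*}
\left(\frac{H_{00}^{1-s}}{\beta} - \frac{\mathbb{E}[H^{-s}]}{\snr}\right)^{-\delta} \longrightarrow \beta^{\delta} H_{00}^{-(1-s)\delta}.
\end{equation*}
(The boundary case $s=1$ is already covered by the channel inversion reduction in Corollary~\ref{cor:2}, so I would either note it separately or simply treat it by continuity in $s$.)

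Substituting these two simplifications into the first display of Theorem~\ref{thm:3} directly yields
\begin{equation*}
q_l^{\rm fpc}(\lambda) = 1 - \mathbb{E}\!\left[\exp\!\left\{-\lambda \pi d^2 \beta^{\delta}\mathbb{E}[H^{\delta}]\mathbb{E}[H^{-s\delta}] H_{00}^{-(1-s)\delta}\right\}\right],
\end{equation*}
which is the first line of Corollary~\ref{cor:3}. Next, applying the same substitutions to the Jensen-approximated second display of Theorem~\ref{thm:3}, and using that $H_{00}$ has the same distribution as $H$, the conditional expectation $\mathbb{E}[(H_{00}^{1-s}/\beta - \mathbb{E}[H^{-s}]/\snr)^{-\delta} \mid H_{00} \geq \kappa(s)]$ collapses to $\beta^{\delta}\mathbb{E}[H^{-(1-s)\delta}]$, producing the closed-form expression $\tilde{q}_l^{\rm fpc}(\lambda)$ stated in the corollary.

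Finally, to obtain $\tilde{\lambda}^{\rm fpc}(\epsilon)$ I would set $\tilde{q}_l^{\rm fpc}(\lambda) = \epsilon$, isolate the exponential, take logarithms, and solve for $\lambda$; this gives exactly the third expression in (\ref{eqn:c}). No step presents a real obstacle: the only subtlety is confirming that the conditioning event $\{H_{00} \geq \kappa(s)\}$ disappears in the limit $\snr \to \infty$, which I handle through the observation $\kappa(s) \to 0$ and dominated convergence on the bounded exponential integrand inside the expectation.
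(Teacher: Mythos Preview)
Your proposal is correct and is precisely the intended derivation: the paper states Corollary~\ref{cor:3} without proof, simply as the ``simplification of the above results when the noise may be ignored,'' so the only thing to do is exactly what you describe---set $\eta=0$ (equivalently $\snr\to\infty$), observe that $\kappa(s)\to 0$ so the conditioning event becomes trivial and the additive $\mathbb{E}[H^{-s}]/\snr$ term drops out, and then read off the three displays. Your remark on the $s=1$ boundary via Corollary~\ref{cor:2} and the dominated-convergence justification for passing the limit through the expectation are more care than the paper itself supplies.
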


The loss factor for FPC, $L^{\rm fpc}$, is the reduction in the
transmission capacity approximation relative to the pure pathloss
case:
\begin{equation}
L^{\rm fpc}(s) = \frac{1}{\mathbb{E}\left[H^{\delta} \right] \mathbb{E}\left[H^{-s\delta} \right] \mathbb{E} \left[ H^{-(1-s) \delta} \right]}.
\end{equation}
Clearly, the loss factor $L^{\mathrm{fpc}}$ for FPC depends on the
design choice of the exponent $s$.

\subsection{Optimal Fractional Power Control Exponent}
\label{sec:optfpc}

Fractional power control represents a balance between the extremes
of no power control and channel inversion.  The mathematical effect
of fractional power control is to replace the
$\mathbb{E}[H^{-\delta}]$ term with $\mathbb{E}[H^{-s \delta}]
\mathbb{E}[H^{-(1-s)\delta}]$.  This is because the signal fading is
{\em softened} by the power control exponent $-s$ so that it results
in a leading term of $H^{-(1-s)}$ (rather than $H^{-1}$) in the
numerator of the SINR expression, and ultimately to the
$\mathbb{E}[H^{-(1-s)\delta} ]$ term.  The interference power is
also softened by the fractional power control and leads to the
$\mathbb{E}[H^{-s\delta} ]$ term.

The key question of course lies in determining the optimal power
control exponent.  Although it does not seem possible to derive an
analytical expression for the exponent that minimizes the general
expression for $q^{\rm fpc}_l(\lambda)$ given in Theorem
\ref{thm:3}, we can find the exponent that minimizes the outage
probability approximation in the case of no noise.
\begin{theorem}
\label{thm:4}
{\em In the absence of noise ($\eta = 0$), the
fractional power control outage probability approximation,
$\tilde{q}^{\rm fpc}_l(\lambda)$, is minimized for
$s=\frac{1}{2}$.  Hence, the fractional power control transmission
attempt intensity approximation, $\tilde{\lambda}^{\rm
fpc}(\epsilon)$ is also maximized for $s = \frac{1}{2}$.}
\end{theorem}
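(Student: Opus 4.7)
The plan is to apply Corollary 3 to express the noise-free outage probability approximation in a form where all the $s$-dependence is concentrated in a single product of moments, and then to bound that product from below by Cauchy--Schwarz. From Corollary 3, in the case $\eta = 0$,
$$\tilde{q}^{\rm fpc}_l(\lambda) = 1 - \exp\bigl\{-\lambda \pi d^2 \beta^{\delta} \mathbb{E}[H^{\delta}] \, f(s)\bigr\}, \qquad f(s) := \mathbb{E}[H^{-s\delta}]\,\mathbb{E}[H^{-(1-s)\delta}].$$
Since the prefactor $\lambda \pi d^2 \beta^{\delta} \mathbb{E}[H^{\delta}]$ is strictly positive and independent of $s$, minimizing $\tilde{q}^{\rm fpc}_l(\lambda)$ over $s \in [0,1]$ is equivalent to minimizing $f(s)$; by the same corollary, $\tilde{\lambda}^{\rm fpc}(\epsilon)$ is inversely proportional to $f(s)$, so it is maximized at the same argmin. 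The whole problem thus reduces to showing that $f$ is minimized at $s = 1/2$.

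The decisive observation is that the two exponents $s\delta/2$ and $(1-s)\delta/2$ natural to $f(s)$ always sum to $\delta/2$, independent of $s$. Setting $U = H^{-s\delta/2}$ and $V = H^{-(1-s)\delta/2}$, the product $UV = H^{-\delta/2}$ has $s$-free distribution, and Cauchy--Schwarz gives
$$\mathbb{E}[H^{-\delta/2}]^2 = \mathbb{E}[UV]^2 \leq \mathbb{E}[U^2]\,\mathbb{E}[V^2] = \mathbb{E}[H^{-s\delta}]\,\mathbb{E}[H^{-(1-s)\delta}] = f(s),$$
so $f(s) \geq \mathbb{E}[H^{-\delta/2}]^2$ uniformly on $[0,1]$. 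Substituting $s = 1/2$ yields $U = V = H^{-\delta/4}$, saturating Cauchy--Schwarz: $f(1/2) = \mathbb{E}[H^{-\delta/2}]^2$. Thus $s = 1/2$ attains the uniform lower bound and is therefore a global minimizer of $f$, which proves both assertions of the theorem.

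There is essentially no obstacle once the Cauchy--Schwarz substitution is spotted; it is the only nontrivial step. An equivalent route via convexity is available and, as a bonus, gives uniqueness: the map $\phi(t) = \log \mathbb{E}[H^{-t\delta}]$ is the cumulant generating function of $-\delta \log H$ and hence convex, so $\log f(s) = \phi(s) + \phi(1-s)$ is convex in $s$ and symmetric about $s = 1/2$, forcing the minimum to occur there (and to be unique whenever $H$ is non-degenerate). The Cauchy--Schwarz route is more elementary and directly furnishes the closed-form lower envelope $\mathbb{E}[H^{-\delta/2}]^2$, so the plan is to present that as the main proof and mention the convexity viewpoint only as a remark.
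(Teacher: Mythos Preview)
Your argument is correct. The reduction to minimizing $f(s)=\mathbb{E}[H^{-s\delta}]\mathbb{E}[H^{-(1-s)\delta}]$ matches the paper, but from there the routes diverge. The paper proves the general fact that $h(s)=\mathbb{E}[X^{-s}]\mathbb{E}[X^{s-1}]$ is log-convex by applying H\"older's inequality with exponents $1/\lambda$ and $1/(1-\lambda)$ to establish $H(\lambda s_1+(1-\lambda)s_2)\le \lambda H(s_1)+(1-\lambda)H(s_2)$, and then checks that the derivative $h'(s)$ vanishes at $s=1/2$; convexity plus a critical point yields the unique minimizer. Your main argument instead applies Cauchy--Schwarz directly to $U=H^{-s\delta/2}$, $V=H^{-(1-s)\delta/2}$, exploiting the fact that $UV=H^{-\delta/2}$ is $s$-free, to obtain a uniform lower bound $f(s)\ge (\mathbb{E}[H^{-\delta/2}])^2$ that is visibly attained at $s=1/2$. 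This is more elementary (Cauchy--Schwarz is the $p=q=2$ case of the H\"older step the paper uses), avoids differentiation, and delivers the closed-form minimum value as a byproduct; the price is that uniqueness is not immediate, which you correctly patch with the convexity remark (itself essentially the paper's argument, phrased via the cumulant generating function of $-\delta\log H$). Either approach suffices for the theorem as stated, since it only asserts where the minimum occurs.
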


\begin{proof}
Because the outage probability/transmission density approximations
depend on the exponent $s$ only through the quantity $\mathbb{E}
\left[H^{-s \delta } \right] \mathbb{E}\left[H^{-(1-s)\delta}
\right]$, it is sufficient to show that $\mathbb{E} \left[H^{-s
\delta } \right] \mathbb{E}\left[H^{-(1-s)\delta} \right] $ is
minimized at $s=\frac{1}{2}$.  To do this, we use the following
general result, which we prove in the Appendix.  For any
non-negative random variable $X$, the function
\begin{equation}
h(s) = \mathbb{E}\left[ X^{-s} \right] \mathbb{E} \left[ X^{s-1} \right],
\end{equation}
is convex in $s$ for $s \in \mathbb{R}$ with a unique minimum at $s
= \frac{1}{2}$.  Applying this result to random variable
$X=H^{\delta}$ gives the desired result.
\end{proof}

The theorem shows that transmission density is maximized, or
equivalently, outage probability is minimized, by balancing the
positive and negative effects of power control, which are reduction
of signal fading and increasing interference, respectively. Using an
exponent greater than $\frac{1}{2}$ {\emph over-compensates} for
signal fading and leads to interference levels that are too high,
while using an exponent smaller than $\frac{1}{2}$ leads to small
interference levels but an {\emph under-compensation} for signal
fading. Note that because the key expression $\mathbb{E} \left[
H^{-s \delta } \right] \mathbb{E} \left[ H^{-(1-s)\delta} \right]$
is convex, the loss relative to using $s=\frac{1}{2}$ increases
monotonically both as $s \to 0$ and $s \to 1$.

One can certainly envision ``fractional" power control schemes
that go even further.  For example, $s > 1$ corresponds to
``super" channel inversion, in which bad channels take resources
from good channels even more so than in normal channel inversion.
Not surprisingly, this is not a wise policy. Less obviously, $s <
0$ corresponds to what is sometimes called ``greedy" optimization,
in which good channels are given more resources at the further
expense of poor channels.  Waterfilling is an example of a greedy
optimization procedure.  But, since $\mathbb{E} \left[ H^{-s
\delta } \right] \mathbb{E} \left[ H^{-(1-s)\delta} \right]$
monotonically increases as $s$ decreases, it is clear that greedy
power allocations of any type are worse than even constant
transmit power under the SINR-target set up.

The numerical results in the next section show that FPC is very
beneficial relative to constant transmit power or channel
inversion. However, fading has a deleterious effect relative to no
fading even if the optimal exponent is used. To see this, note
that $x^{-\frac{1}{2}}$ is a convex function and therefore
Jensen's yields $\mathbb{E}[X^{-\frac{1}{2}}] \geq
(\mathbb{E}[X])^{-\frac{1}{2}}$ for any non-negative random
variable $X$. Applying this to $X=H^{\delta}$ we get $\left(
\mathbb{E} \left[ H^{-\frac{\delta}{2}} \right] \right)^2 \geq
\left( \mathbb{E}[H^{\delta}] \right)^{-1}$, which implies
\begin{eqnarray*}
L^{\rm fpc}(1/2) = \frac{1} { \mathbb{E} \left[ H^{\delta} \right] \left( \mathbb{E} \left[ H^{- \frac{\delta}{2} } \right] \right)^2 } \leq 1.
\end{eqnarray*}
Therefore, fractional PC cannot fully overcome fading, but it is
definitely a better power control policy than constant power
transmission or traditional power control (channel inversion).

%%%%%%%%%%%%%%%%%%%%%%%%%%%%%%%%%%%%%%%%%%%%
\section{\label{sec:numerical} Numerical Results and Discussion}

In this section, the implications of fractional power control are
illustrated through numerical plots and analytical discussion. The
tightness of the bounds will be considered as a function of the
system parameters, and the choice of a robust FPC exponent $s$
will be proposed.  As default parameters, the simulations assume
\begin{equation}
\begin{array}{cccccc}
\alpha = 3, & \beta = 1 ~(0 ~{\rm dB}), & d = 10 {\rm m}, & {\rm
SNR} = \frac{p d^{-\alpha}}{\eta  } = 100 ~(20~ {\rm dB}), & \lambda
= 0.0001 ~ \frac{\rm users}{{\rm m}^2}.
\end{array}
\end{equation}
Furthermore, Rayleigh fading is assumed for the numerical results.

\subsection{Effect of Fading}

The benefit of fractional power control can be quickly illustrated
in Rayleigh fading, in which case the channel power $H$ is
exponentially distributed and the moment generating function is
therefore
\begin{equation}
\label{eqn:at} \mathbb{E}[H^t] = \Gamma(1+t),
\end{equation}
where $\Gamma(\cdot)$ is the standard gamma function. If
fractional power control is used, the transmission capacity loss
due to fading is
\begin{eqnarray} \label{eq-rayleigh}
L^{\rm fpc} = \frac{1} { \mathbb{E} \left[ H^{\delta} \right]
\mathbb{E} \left[ H^{-s \delta } \right] \mathbb{E}\left[
H^{(1-s)\delta} \right] } = \frac{1}{ \Gamma(1 + \delta) \cdot
\Gamma(1 - s \delta) \cdot \Gamma(1 - (1-s) \delta)}
\end{eqnarray}
In Fig. \ref{fig-pc1} this loss factor ($L$) is plotted as a
function of $s$ for path loss exponents $\alpha = \{2.1, 3, 4\}$.
Notice that for each value of $\alpha$ the maximum takes place at
$s=\frac{1}{2}$, and that the cost of not using fractional power
control is highest for small path loss exponents because
$\Gamma(1+x)$ goes to infinity quite steeply as $x \rightarrow
-1$.  This plot implies that in severe fading channels, the gain
from FPC can be quite significant.

It should be noted that the expression in (\ref{eq-rayleigh}) is for
the case of no thermal noise ($\eta=0$).  In this case the power
cost of FPC completely vanishes, because the same power
normalization (by $\mathbb{E}[H^{-s}]$) is performed by each
transmitting node and therefore this normalization cancels in the
SIR expression.  On the other hand, this power cost does not vanish
if the noise is strictly positive and can potentially be quite
significant, particularly if $\snr$ is not large.  A simple
application of Jensen's shows that the power normalization factor
$\mathbb{E}[H^{-s}]$ is an increasing function of the exponent $s$
for any distribution on $H$.  For the particular case of Rayleigh
fading this normalization factor is $ \Gamma(1-s)$ which makes it
prohibitively expensive to choose $s$ very close to one; indeed, the
choice $s=1$ requires infinite power and thus is not feasible.  On
the other hand, note that $\Gamma(.5)$ is approximately $2.5$ dB and
thus the cost of a moderate exponent is not so large.  When the
interference-free $\snr$ is reasonably large, this normalization
factor is relatively negligible and the effect of FPC is well
approximated by (\ref{eq-rayleigh}).

\subsection{Tightness of Bounds}

 There are two principle approximations made in attaining the
expressions for outage probability and transmission capacity in
Theorem 3.  First, the inequality is due to considering only
\emph{dominant} interferers; that is, an interferer whose channel to
the desired receiver is strong enough to cause outage even without
any other interferers present.  This is a lower bound on outage
since it ignores non-dominant interferers, but nevertheless has been
seen to be quite accurate in our prior work
\cite{WebYan05,WebAnd07,WebAndJin07}.  Second, Jensen's inequality
is used to bound $\mathbb{E}[\exp(X)] \geq \exp(\mathbb{E}[X])$ in
the opposite direction, so this results in an approximation to the
outage probability rather than a lower bound; numerical results
confirm that this approximation is in fact not a lower bound in
general. Therefore, we consider the three relevant quantities: (1)
the actual outage probability $q^{\rm fpc}(\lambda)$, which is
determined via Monte-Carlo simulation and does not depend on any
bounds or approximations, (2) a numerical computation of the outage
probability lower bound $q^{\rm fpc}_l(\lambda)$, and (3) the
approximation to the outage probability $\tilde{q}^{\rm
fpc}_l(\lambda)$ reached by applying Jensen's inequality to $q^{\rm
fpc}_l(\lambda)$. Note that because of the two opposing bounds (one
lower and one upper), we cannot say \emph{a priori} that method (2)
will produce more accurate expressions than method (3).

The tightness of the bounds is explored in Figs. \ref{fig:default} -
\ref{fig:SINR}. Consider first Fig. \ref{fig:default} for the
default parameters given above. We can see that the lower bound and
the Jensen approximation both reasonably approximate the simulation
results, and the approximation winds up serving as a lower bound as
well. The Jensen's approximation is very accurate for large values
of $s$ (i.e., closer to channel inversion), and while looser for
smaller values of $s$, this ``error" actually moves the Jensen's
approximation closer to the actual (simulated) outage probability.
The Jensen's approximation approaches the lower bound as $s \to 1$
because the random variable $H^{(1-s)\delta}$ approaches a constant,
where Jensen's inequality trivially holds with equality (see, e.g.,
(\ref{eqn:c})). Changing the path loss exponent $\alpha$, the SNR,
the target SINR $\beta$, or the density $\lambda$ can have a
significant effect on the bounds, as we will see.  With the
important exception of high density networks, the approximations are
seen to be reasonably accurate for reasonable parameter values.

\textbf{Path loss.} In Fig. \ref{fig:PL}, the bounds are given for
$\alpha = 2.2$ and $\alpha = 5$, which correspond to much weaker and
much stronger attenuation than the (more likely) default case of
$\alpha = 3$. For weaker attenuation, we can see that the lower
bound holds the right shape but is less accurate, while the Jensen's
approximation becomes very loose when the FPC exponent $s$ is small.
For path loss exponents near $2$, the dominant interferer
approximation is weakened because the attenuation of non-dominant
interferers is less drastic. On the other hand, both the lower bound
and Jensen's approximation are very accurate in strong attenuation
environments as seen in the $\alpha=5$ plot.  This is because the
dominant interferer approximation is very reasonable in such cases.

\textbf{SNR.}  The behavior of the bounds also varies as the
background noise level changes, as shown in Fig. \ref{fig:SNR}. When
the SNR is 10 dB, the bounds are quite tight.  However, the behavior
of outage probability as a function of $s$ is quite different from
the default case in Fig. \ref{fig:default}: outage probability
decreases slowly as $s$ is increased, and a rather sharp jump is
seen as $s$ approaches one.  When the interference-free SNR is only
moderately larger than the target SINR (in this case there is a 10
dB difference between $\snr$ and $\beta$), a significant portion of
outages occur because the signal power is so small that the
\textit{interference-free} received SNR falls below the target
$\beta$; this probability is captured by the $\mathbb{P}
\left(H_{00} \geq \kappa(s) \right)$ terms in Theorem \ref{thm:3}.
On the other hand, if $\snr$ is much larger than the target $\beta$,
outages are almost always due to a combination of  signal fading and
large interference power rather than to signal fading alone (i.e.,
$\mathbb{P} \left(H_{00} \geq \kappa(s) \right)$ is insignificant
compared to the total outage probability).  When outages caused
purely by signal fading are significant, the dependence on the
exponent $s$ is significantly reduced.  Furthermore, the power cost
of FPC becomes much more significant when the gap between $\snr$ and
$\beta$ is reduced; this explains the sharp increase in outage as
$s$ approaches one. When $\snr = 30$ dB, the behavior is quite
similar to the 20 dB case because at this point the gap between
$\snr$ and $\beta$ is so large that thermal noise can effectively be
neglected.

\textbf{Target SINR.}  A default SINR of $\beta = 1$ was chosen,
which corresponds roughly to a spectral efficiency of 1 bps/Hz with
strong coding, and lies between the low and high SINR regimes.
Exploring an order of magnitude above and below the default in Fig.
\ref{fig:SINR}, we see that for $\beta = 0.1$ the bounds are highly
accurate, and show that $s^* = \frac{1}{2}$ is a good choice.  For
this choice of parameters there is a 30 dB gap between $\snr$ and
$\beta$ and thus thermal noise is essentially negligible.  On the
other hand, if $\beta = 10$ the bounds are still reasonable, but the
outage behavior is very similar to the earlier case where $\snr=10$
dB and $\beta=0$ dB because there is again only a 10 dB gap between
$\snr$ and $\beta$. Despite the qualitative and quantitative
differences for low SNR and high target SINR from the default
values, it is interesting to note that in both cases $s =
\frac{1}{2}$ is still a robust choice for the FPC exponent.

\textbf{Density.}  The default value of $\lambda = 0.0001$
corresponds to a somewhat low density network because the expected
distance to the nearest interferer is approximately $50$ m, while
the TX-RX distance is $d=10$ m.  In Fig. \ref{fig:DENSITY} we
explore a density an order of magnitude lower and higher than the
default value.  When the network is even sparser, the bounds are
extremely accurate and we see that $s^* = \frac{1}{2}$ is a
near-optimal choice.  However, the behavior with $s$ is very
different in a dense network where $\lambda = .001$ and the nearest
interferer is approximately $17$ m away.  In such a network we see
that the nearest neighbor bound is quite loose because a substantial
fraction of outages are caused by the summation of non-dominant
interferers, as intuitively expected for a dense network.  Although
the bound is loose, it does capture the fact that outage increases
with the exponent $s$. On the other hand, the Jensen approximation
is loose and does not correctly capture the relationship between $s$
and outage.  The approximation is based on the fact that the
function $e^{-x}$ is approximately linear for small $x$. The
quantity $x$ is proportional to $\pi \lambda d^2$, which is large
when the network is dense relative to TX-RX distance $d$, and thus
this approximation is not valid for relatively dense networks.

\subsection{Choosing the FPC exponent $s$}

Determining the optimum choice of FPC exponent $s$ is a key interest
of this paper.  As seen in Sect. \ref{sec:optfpc}, $s^* =
\frac{1}{2}$ is optimal for the Jensen's approximation and with no
noise, both of which are questionable assumptions in many regimes of
interest.  In Figs. \ref{fig:sPL} -- \ref{fig:sDENSITY}, we plot the
truly optimal choice of $s^*$ for the default parameters, while
varying $\alpha$, SNR, $\beta$, and $\lambda$, respectively.  That
is, the value of $s$ that minimizes the true outage probability is
determined for each set of parameters.  The FPC exponents
$s_l(\Delta)$ and $s_u(\Delta)$ are also plotted, which provide
$\Delta$\% error below and above the optimum outage probability. For
the plots, we let $\Delta = 1$ and $\Delta = 10$.

The key findings are: (1) In the pathloss ($\alpha$) plot, $s^* =
\frac{1}{2}$ is a very robust choice for all attenuation regimes;
(2) For SNR, $s^* = \frac{1}{2}$ is only robust at high SNR, and at
low SNR constant transmit power is preferable; (3) For target SINR
$\beta$, $s^* = \frac{1}{2}$ is robust at low and moderate SINR
targets (i.e. low to moderate data rates), but for high SINR targets
constant transmit power is preferred; (4) For density $\lambda$,
$s^* = \frac{1}{2}$ is robust at low densities, but constant
transmit power is preferred at high densities.

 The explanation
for findings (2) and (3) is due to the dependence of outage behavior
on the difference between $\snr$ and $\beta$.  As seen earlier,
thermal noise is essentially negligible when this gap is larger than
approximately 20 dB.  As a result, it is reasonable that the
exponent shown to be optimal for noise-free networks
($s=\frac{1}{2}$) would be near-optimal for networks with very low
levels of thermal noise.  On the other hand, outage probability
behaves quite differently when $\snr$ is only slightly larger than
$\beta$.  In this case, power is very valuable and it is not worth
incurring the normalization cost of FPC and thus very small FPC
exponents are optimal. Intuitively, achieving high data rates in
moderate SNR or moderate data rates in low SNR are difficult
objectives in a decentralized network.  The low SNR case is somewhat
anomalous, since the SNR is close to the target SINR, so almost no
interference can be tolerated.  Similarly, to meet a high SINR
constraint in a random network of reasonable density, the outage
probability must be quite high, so this too may not be particularly
meaningful.

To explain (4), recall that the Jensen-based approximation to outage
probability is not accurate for dense networks and the plot shows
that constant power ($s=0$) is preferred at high
densities.\footnote{Based on the figure it may appear that choosing
$s<0$, which means users with good channels transmit with additional
power, outperforms constant power transmission.  However, numerical
results (not shown here) indicate that this provides a benefit only
at extremely high densities for which outage probability is
unreasonably large.  Intuitively, a user with a poor channel in a
dense network is extremely unlikely to be able to successfully
communicate and global performance is improved by having such a user
not even attempt to transmit, as done in the threshold-based policy
studied in \cite{WebAndJin07}.} Fractional power control softens
signal fading at the expense of more harmful interference power, and
this turns out to be a good tradeoff in relatively sparse networks.
In dense networks, however, there generally are a large number of
nearby interferers and as a result the benefit of reducing the
effect of signal fading (by increasing exponent $s$) is overwhelmed
by the cost of more harmful interference power.  Note that this is
consistent with results on channel inversion ($s=1$) in
\cite{WebAndJin07}, where $s=0$ and $s=1$ are seen to be essentially
equivalent at low densities (as expected by the Jensen
approximation) but inversion is inferior at high densities.

\section{\label{sec:5} Possible Areas for Future Study}

Given the historically very high level of interest in the subject
of power control for wireless systems, this new approach for power
control opens many new questions. It appears that FPC has
potential for many applications due to its inherent simplicity,
requirement for only simple pairwise feedback, and possible
\emph{a priori} design of the FPC parameter $s$. Some areas that
we recommend for future study include the following.

\textbf{How does FPC perform in cellular systems?}.  Cellular
systems in this case are harder to analyze than ad hoc networks,
because the base stations (receivers) are located on a regular
grid and thus the tractability of the spatial Poisson model cannot
be exploited.  On the other hand, FPC may be even more helpful in
centralized systems.  Note that some numerical results for
cellular systems are given in reference \cite{XiaRat06}, but no
analysis is provided.

\textbf{Can FPC be optimized for spectral efficiency?}.  In this
paper we have focused on outage relative to an SINR constraint as
being the metric.  Other metrics can be considered, for example
maximizing the average spectral efficiency, i.e. $\max
\mathbb{E}[\log_2(1+{\rm SINR})]$, which could potentially result in
optimal exponents $s < 0$, which is conceptually similar to
waterfilling.

\textbf{What is the effect of scheduling on FPC?}  If scheduling
is used, then how should power levels between a transmitter and
receiver be set?  Will $s = \frac{1}{2}$ still be optimal?  Will
the gain be increased or reduced?  We conjecture that the gain
from FPC will be smaller but non-zero for most any sensible
scheduling policy, as the effect of interference inversion is
softened.

\textbf{Can FPC be used to improve iterative power control?}  At
each step of the Foschini-Miljanic algorithm (as well as most of its
variants), transmitters adjust their power in a manner similar to
channel-inversion, i.e., each transmitter fully compensates for the
current SINR.  While this works well when the target SINR's are
feasible, it does not necessarily work well when it is not possible
to satisfy all users' SINR requirements.  In such a setting, it may
be preferable to perform \emph{partial} compensation for the current
SINR level during each iteration.  For example, if a link with a 10
dB target is currently experiencing an SINR of 0 dB, rather than
increasing its transmit power by 10 dB to fully compensate for this
gap (as in the Foschini-Miljanic algorithm), an FPC-motivated
iterative policy might only boost power by 5 dB (e.g., adjust power
in linear units according to the square root of the gap).

\section{\label{sec:6} Conclusions}

This paper has applied fractional power control as a general
approach to pairwise power control in decentralized (e.g. ad hoc or
spectrum sharing) networks. Using two approximations, we have shown
that a fractional power control exponent of $s^* = \frac{1}{2}$ is
optimal in terms of outage probability and transmission capacity, in
contrast to constant transmit power ($s=0$) or channel inversion
($s=1$) in networks with a relatively low density of transmitters
and low noise levels.  This implies that there is an optimal balance
between compensating for fades in the desired signal and amplifying
interference.  We saw that a gain on the order of $50\%$ or larger
(relative to no power control or channel inversion) might be typical
for fractional power control in a typical wireless channel.

\appendix

We prove that for any non-negative random variable $X$, the function
\begin{equation}
h(s) = \mathbb{E}\left[ X^{-s} \right] \mathbb{E} \left[ X^{s-1}
\right],
\end{equation}
is convex in $s$ for $s \in \mathbb{R}$ with a unique minimum at $s
= \frac{1}{2}$.  In order to show $h(s)$ is convex, we show $h$ is
log-convex and use the fact that a log-convex function is convex. We
define
\begin{equation}
H(s) = \log h(s) = \log \left( \mathbb{E}\left[X^{-s} \right]
\mathbb{E}\left[X^{s-1} \right] \right),
\end{equation}
and recall H\"{o}lder's inequality:
\begin{equation}
\mathbb{E}[XY] \leq \left( \mathbb{E}[X^p]
\right)^{\frac{1}{p}}\left( \mathbb{E}[Y^q] \right)^{\frac{1}{q}},
~~~~ \frac{1}{p}+\frac{1}{q}=1.
\end{equation}
The function $H(s)$ is convex if $H(\lambda s_1 + (1-\lambda)s_2)
\leq \lambda H(s_1) + (1-\lambda) H(s_2)$ for all $s_1,s_2$ and all
$\lambda \in [0,1]$. Using H\"{o}lder's with $p = \frac{1}{\lambda}$
and $q=\frac{1}{1-\lambda}$ we have:
\begin{eqnarray}
H(\lambda s_1 + (1-\lambda)s_2) &=& \log \left( \mathbb{E}\left[X^{-(\lambda s_1 + (1-\lambda)s_2)} \right] \mathbb{E}\left[X^{(\lambda s_1 + (1-\lambda)s_2)-1} \right] \right) \nonumber \\
&=& \log \left( \mathbb{E}\left[X^{-\lambda s_1} X^{(1-\lambda)s_2} \right] \mathbb{E}\left[X^{\lambda (s_1-1)} X^{(1-\lambda)(s_2-1)} \right] \right) \nonumber \\
& \leq & \log \left( \mathbb{E}\left[X^{-s_1} \right]^{\lambda} \mathbb{E} \left[ X^{s_2} \right]^{1-\lambda} \mathbb{E}\left[X^{s_1-1}\right]^{\lambda} \mathbb{E} \left[X^{s_2-1} \right]^{1-\lambda} \right) \nonumber \\
& = & \lambda \log \left( \mathbb{E}\left[X^{-s_1} \right] \mathbb{E}\left[X^{s_1-1}\right] \right) + (1-\lambda) \log \left( \mathbb{E} \left[ X^{s_2} \right] \mathbb{E} \left[X^{s_2-1} \right] \right) \nonumber \\
&=& \lambda H(s_1) + (1-\lambda)  H(s_2).
\end{eqnarray}
This implies $H(s)$ is convex, which further implies convexity of
$h(s)$. The derivative of $h$ is
\begin{equation}
h'(s) = \mathbb{E}\left[ X^{-s} \right] \mathbb{E}\left[ X^{s-1}
\log X \right] - \mathbb{E}\left[ X^{s-1} \right] \mathbb{E}\left[
X^{-s} \log X \right],
\end{equation}
and it can easily be seen that $s^* = \frac{1}{2}$ is the unique
minimizer satisfying $h'(s) = 0$.

%\bibliographystyle{IEEEtran}
%\bibliography{Andrews}

\pagebreak

\begin{figure}
\centering
\includegraphics[width=3.5in]{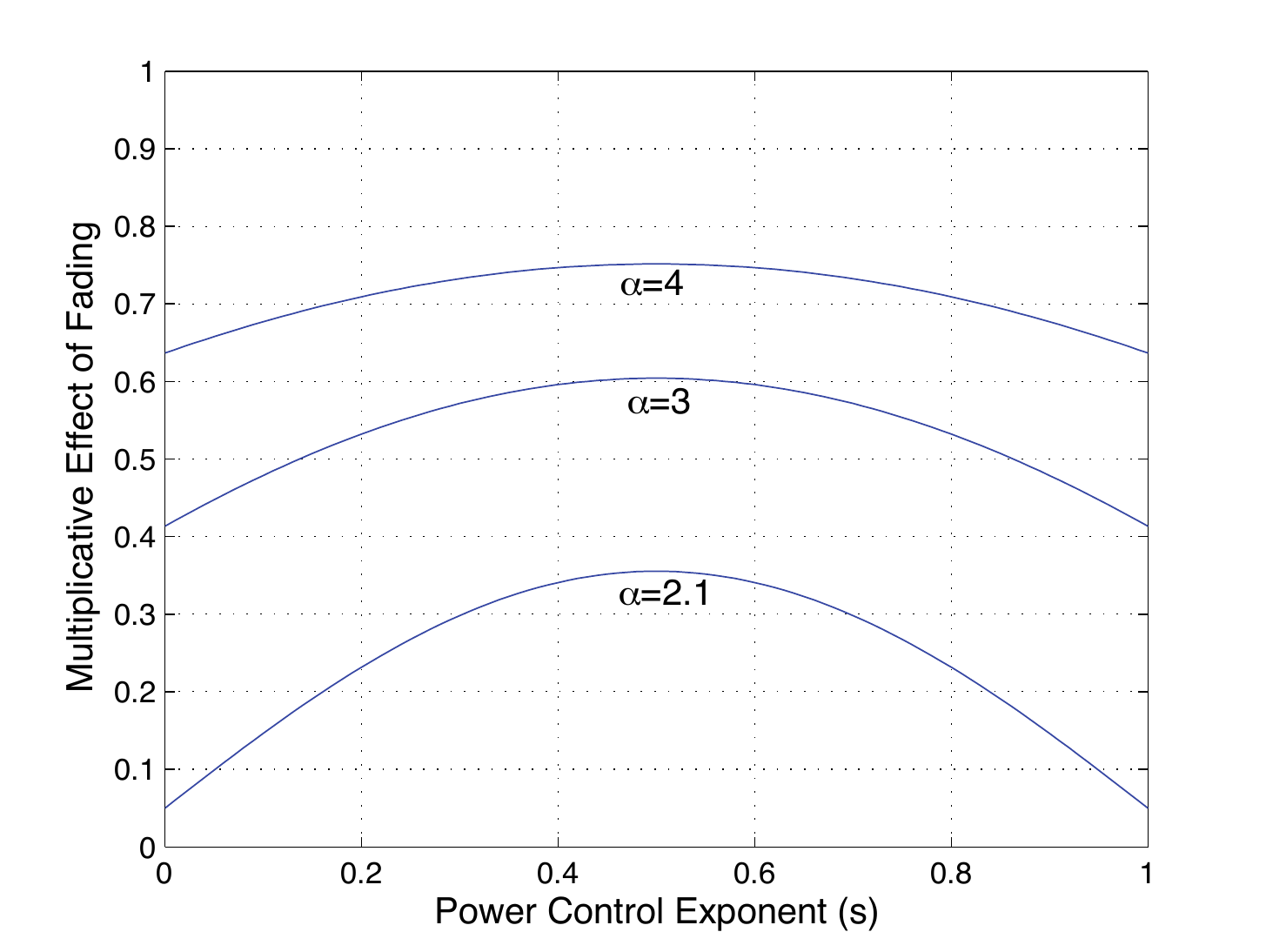}
\caption{The loss factor $L$ vs. $s$ for Rayleigh fading.  Note
that $L^{\mathrm{cp}}$ and $L^{\mathrm{ci}}$ are the left edge and
right edge of the plot, respectively.} \label{fig-pc1}
\end{figure}

\begin{figure}
\centering
\includegraphics[width=3.5in]{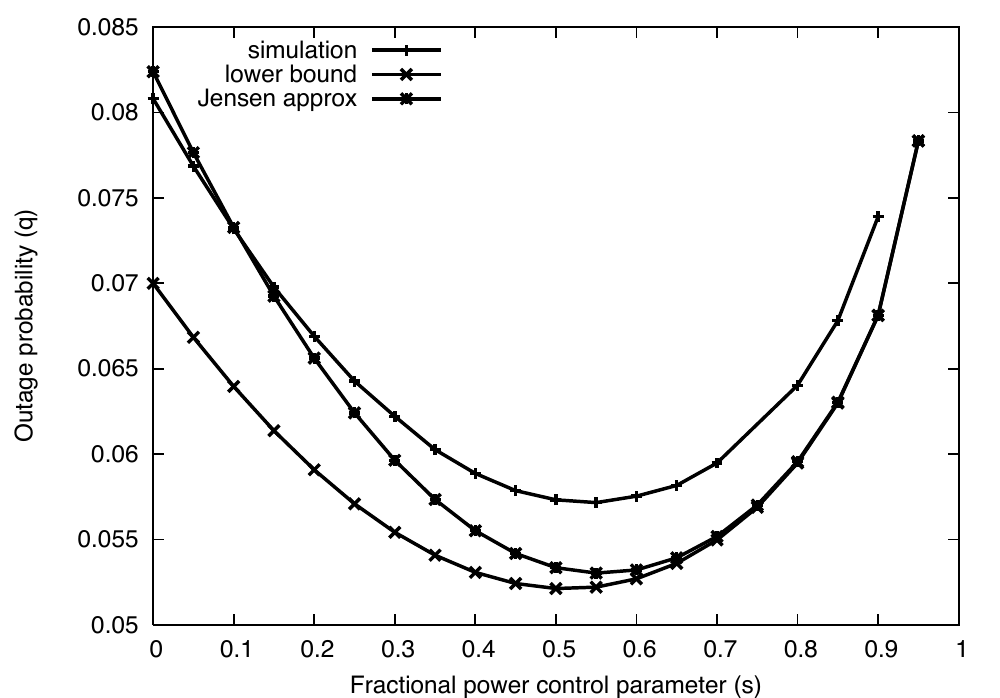}
\caption{The outage probability (simulated, lower bound, and
Jensen's approximation) vs. FPC exponent $s$ for the default
parameters.} \label{fig:default}
\end{figure}

\begin{figure}
\centering
\includegraphics[width=3.5in]{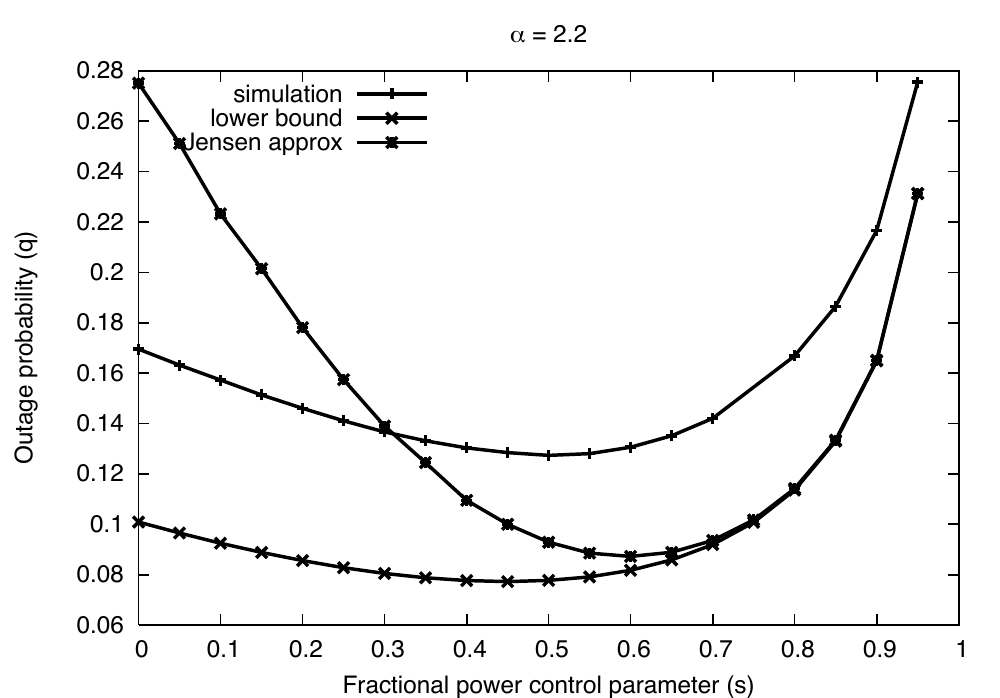}
\includegraphics[width=3.5in]{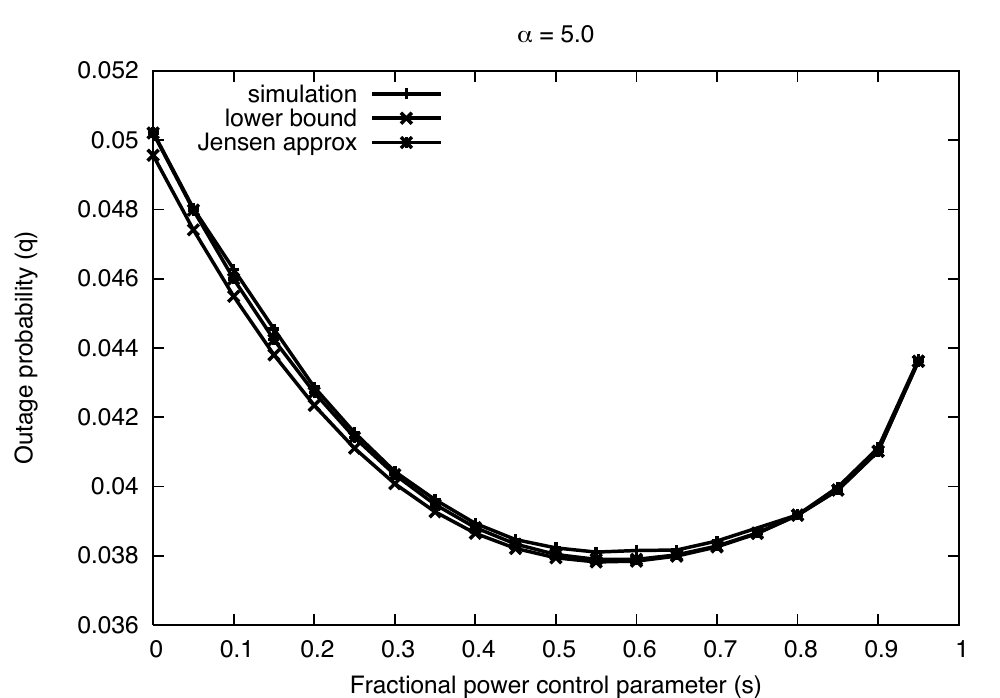}
\caption{The outage probability (simulated, lower bound, and
Jensen's approximation) vs. FPC exponent $s$ for $\alpha = 2.2$
(left) and $\alpha = 5$ (right).} \label{fig:PL}
\end{figure}

\begin{figure}
\centering
\includegraphics[width=3.5in]{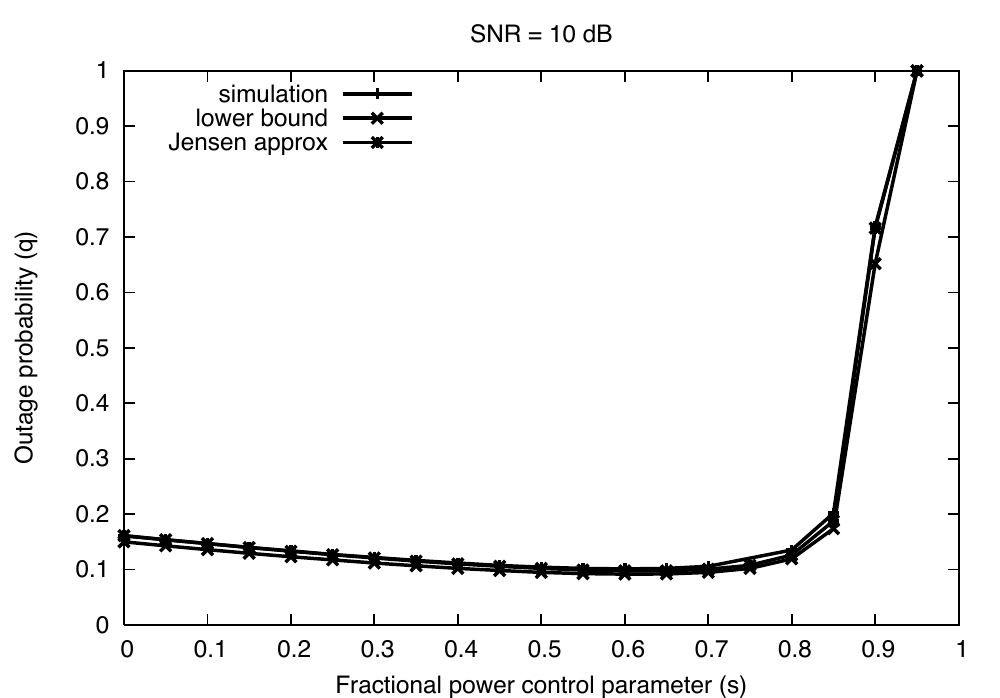}
\includegraphics[width=3.5in]{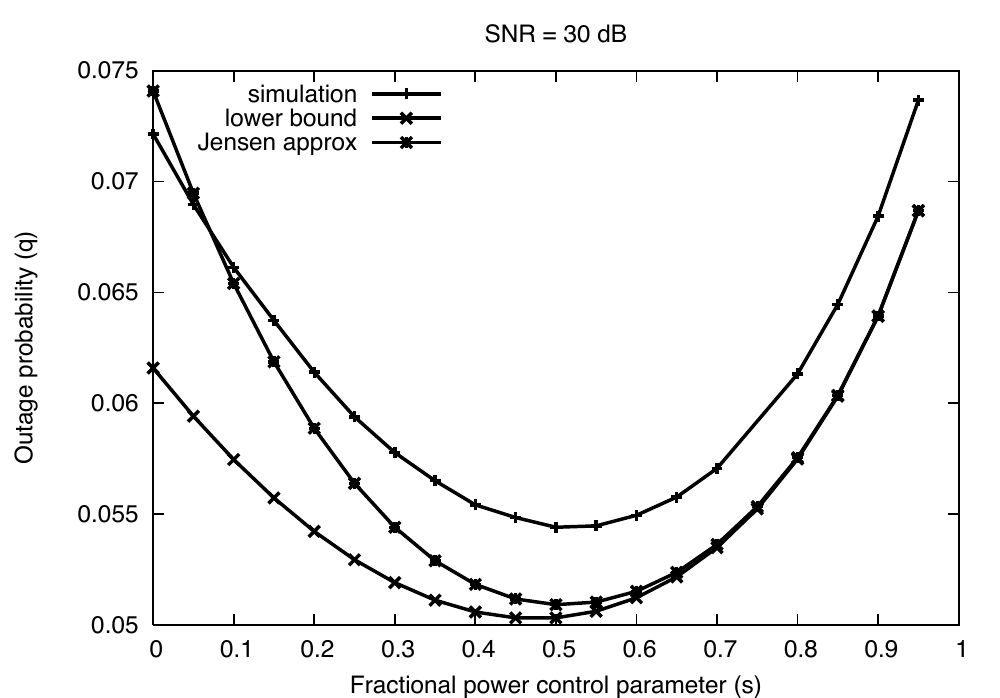}
\caption{The outage probability (simulated, lower bound, and
Jensen's approximation) vs. FPC exponent $s$ for ${\rm SNR} = 10$ dB
(left) and ${\rm SNR} = 30$ dB (right).} \label{fig:SNR}
\end{figure}

\begin{figure}
\centering
\includegraphics[width=3.5in]{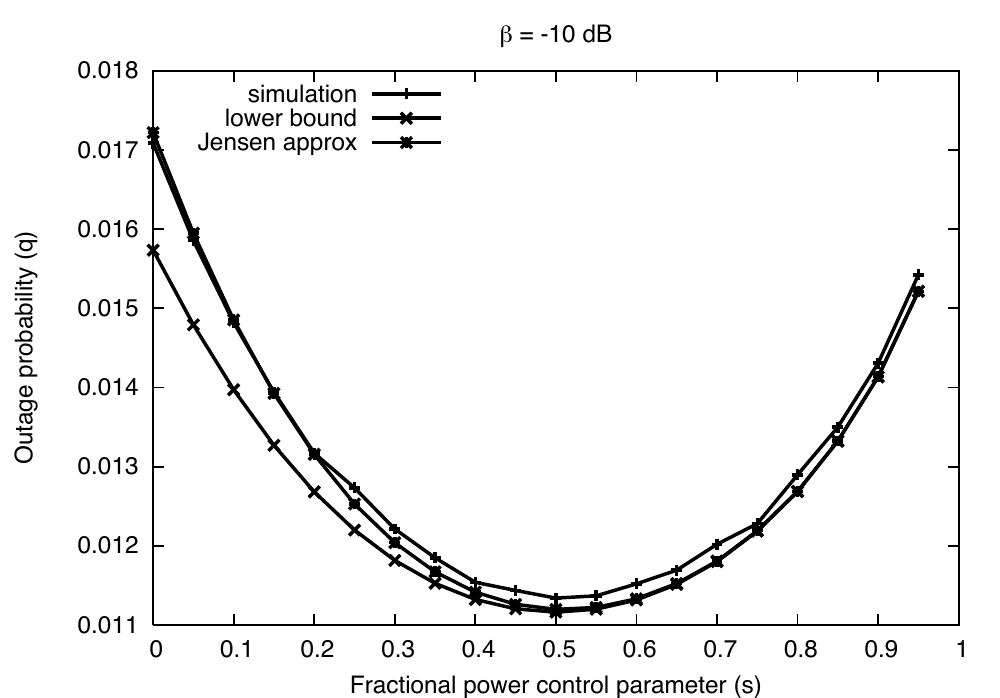}
\includegraphics[width=3.5in]{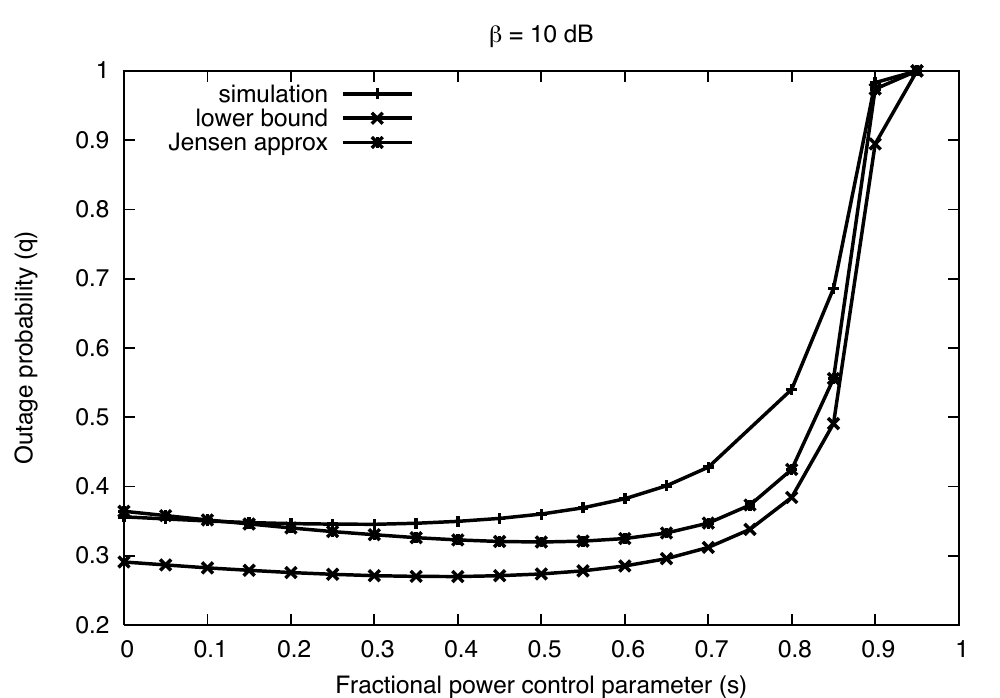}
\caption{The outage probability (simulated, lower bound, and
Jensen's approximation) vs. FPC exponent $s$ for $\beta = -10$ dB
(left) and $\beta = 10$ dB (right).} \label{fig:SINR}
\end{figure}

\begin{figure}
\centering
\includegraphics[width=3.5in]{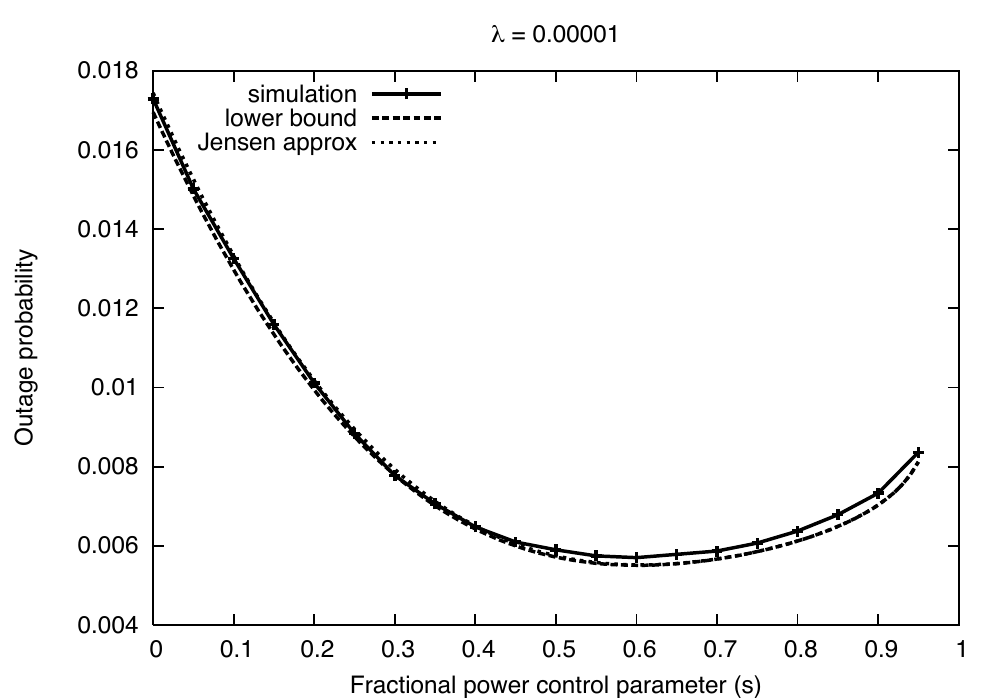}
\includegraphics[width=3.5in]{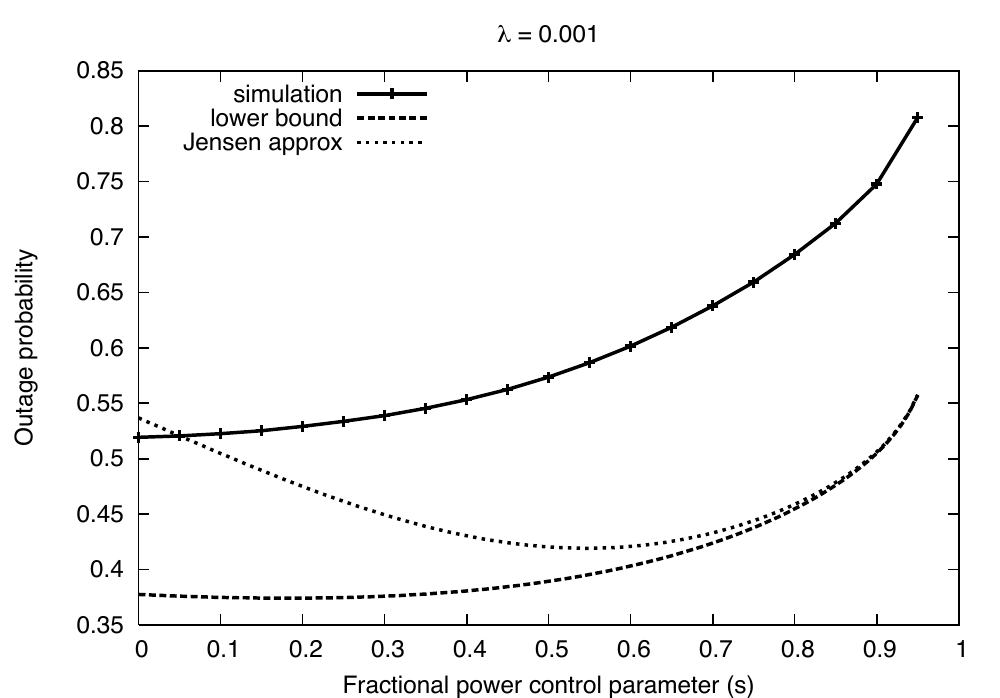}
\caption{The outage probability (simulated, lower bound, and
Jensen's approximation) vs. FPC exponent $s$ for $\lambda = 0.00001$
(left) and $\lambda = 0.001$ (right).} \label{fig:DENSITY}
\end{figure}

\begin{figure}
\centering
\includegraphics[width=3.5in]{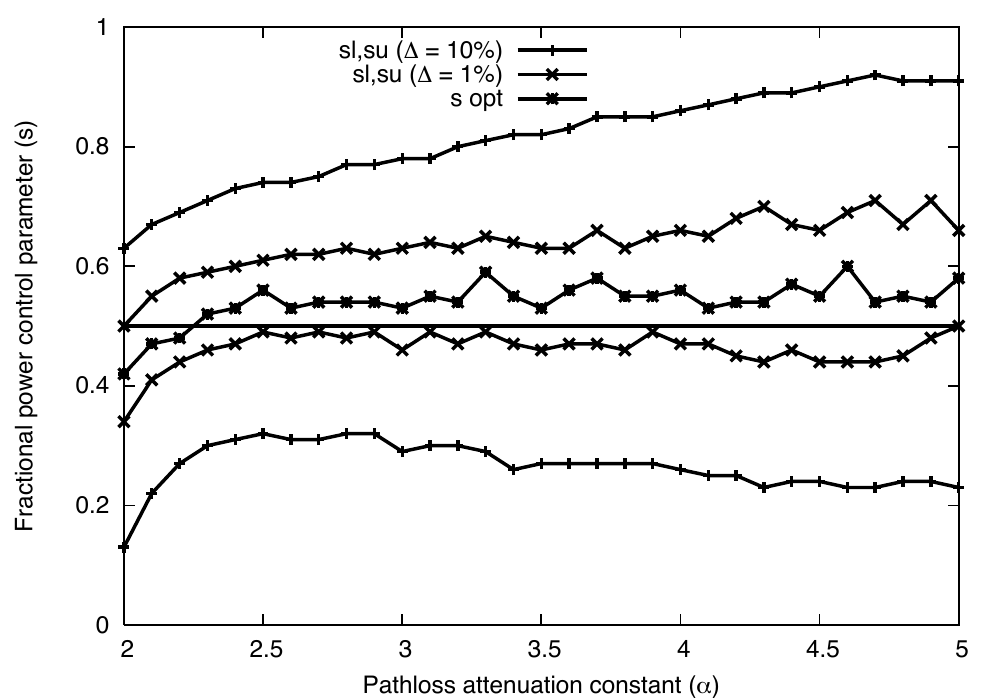}
\caption{The optimal choice of FPC exponent $s$ vs. PL exponent
$\alpha$, with $\pm 1$\% and $\pm10$\% selections for $s$.}
\label{fig:sPL}
\end{figure}

\begin{figure}
\centering
\includegraphics[width=3.5in]{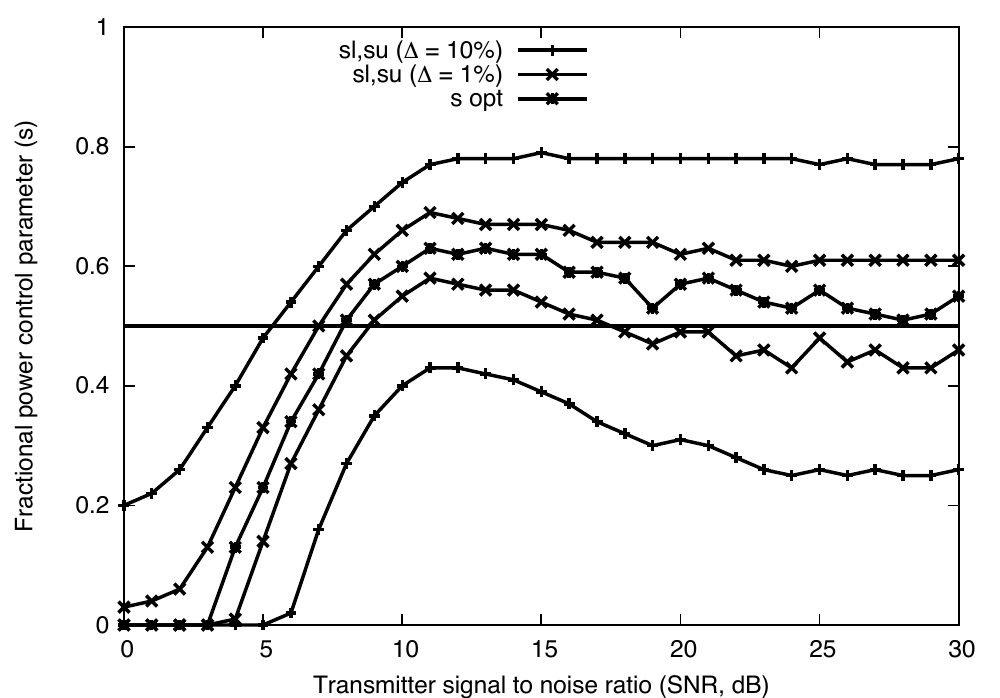}
\caption{The optimal choice of FPC exponent $s$ vs. transmitter SNR
$= \frac{\rho}{\eta}$, with $\pm 1$\% and $\pm10$\% selections for
$s$.} \label{fig:sSNR}
\end{figure}

\begin{figure}
\centering
\includegraphics[width=3.5in]{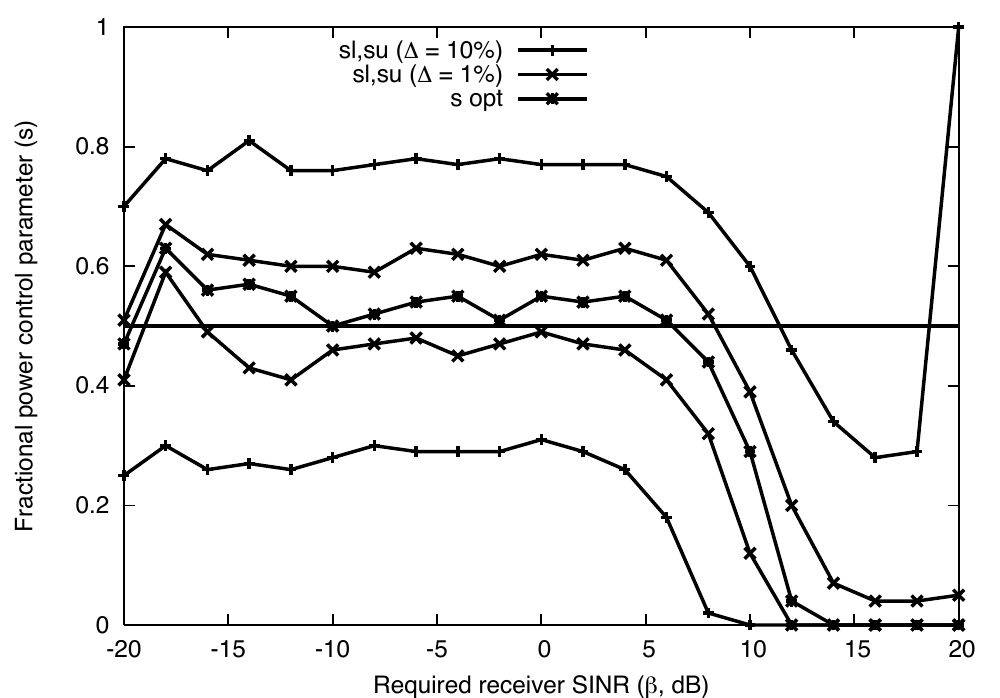}
\caption{The optimal choice of FPC exponent $s$ vs. SINR constraint
$\beta$, with $\pm 1$\% and $\pm10$\% selections for $s$.}
\label{fig:sSINR}
\end{figure}

\begin{figure}
\centering
\includegraphics[width=3.5in]{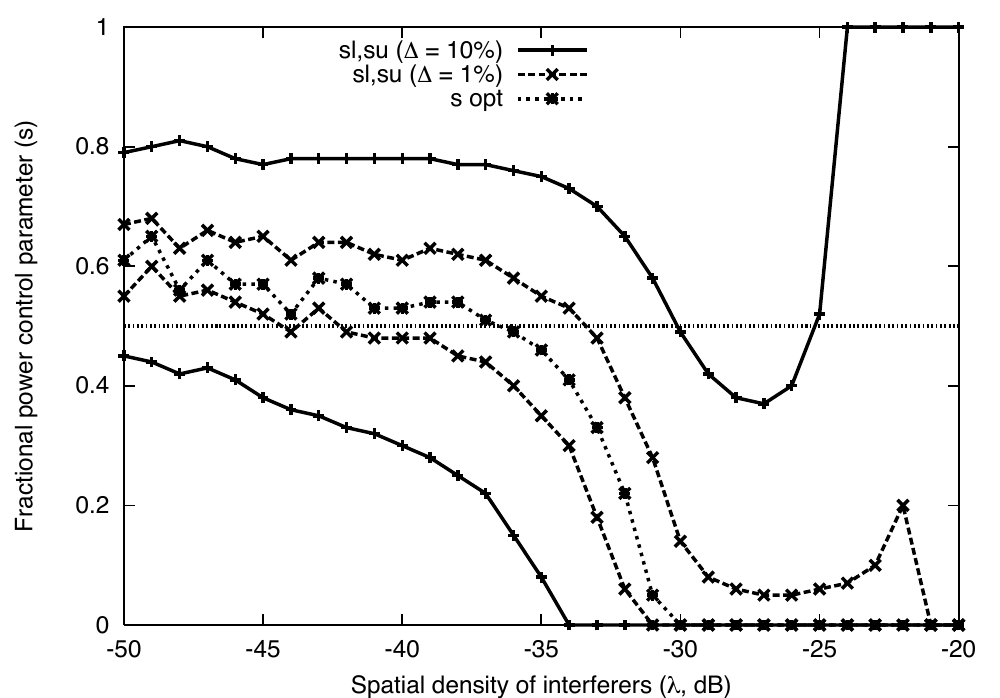}
\caption{The optimal choice of FPC exponent $s$ vs. density
$\lambda$, with $\pm 1$\% and $\pm10$\% selections for $s$.}
\label{fig:sDENSITY}
\end{figure}

\end{document}